\newif\ifcameraready
\newcommand{\scl}{0.85}
\newcommand{\msf}[1]{\mathsf{#1}}
\newcommand{\mbb}[1]{\mathbb{#1}}
\newcommand{\mtt}[1]{\mathtt{#1}}
\newcommand{\rest}[2]{#1 \mathbin{\upharpoonright} #2}
\newcommand{\inft}{\text{\scalebox{0.7}[1]{\boldmath$\infty$}}}
\newcommand{\ift}{\text{\scalebox{0.5}[1]{\boldmath$\infty$}}}
\newcommand{\tta}{\mathtt{a}}
\newcommand{\CL}{\mathit{Left}}
\newcommand{\CR}{\mathit{Right}}
\newcommand{\SFL}{\mathsf{Left}}
\newcommand{\SFR}{\mathsf{Right}}
\newcommand{\bdl}{\$}
\newcommand{\rd}[1]{\underline{\textcolor{red}{#1}}}
\newcommand{\llrrbracket}[1]{\llbracket #1 \rrbracket}
\newcommand{\RA}{\mathit{RA}}
\newcommand{\LF}{\mathit{LF}}
\newcommand{\Rot}{\mathit{Rot}}
\newcommand{\PBWTF}{\mathit{pBWT}}
\newcommand{\PBWT}{\msf{L}}
\newcommand{\First}{\msf{F}}
\newcommand{\LCP}{\msf{LCP}^{\inft}}
\newcommand{\Lcp}{\mathit{lcp}^{\inft}}
\newcommand{\RM}{\msf{RM}}
\newcommand{\SFC}{\msf{C}}
\newcommand{\Rank}{\mathsf{rank}}
\newcommand{\Select}{\mathsf{select}}
\newcommand{\Access}{\mathsf{access}}
\newcommand{\Insert}{\mathsf{insert}}
\newcommand{\Delete}{\mathsf{delete}}
\newcommand{\Cnt}{\mathit{cnt}}
\newcommand{\Dol}{k_S}
\newcommand{\Pos}{k_T}
\newcommand{\lrangle}[1]{\langle #1 \rangle}
\newcommand{\UpdateLF}{\mathsf{UpdateLF}}
\newcommand{\InsertDol}{\mathsf{InsertRow}}
\newcommand{\UpdateLCP}{\mathsf{UpdateLCP}}
\newcommand{\UpdateAll}{\mathsf{UpdateAll}}
\newtheorem{proposition}{Proposition}
\newtheorem{definition}{Definition}
\newtheorem{theorem}{Theorem}
\newtheorem{lemma}{Lemma}
\newtheorem{corollary}{Corollary}
\renewenvironment{proof}[1][\proofname]{\par
	\normalfont \topsep6\p@\@plus6\p@\relax
	\trivlist
	\item\relax
	{\itshape
		#1\@addpunct{.}}\hspace\labelsep\ignorespaces
}{%
	\popQED\endtrivlist\@endpefalse
}
\begin{document}

\title{Computing the Parameterized Burrows--Wheeler Transform Online}

\author[1]{Daiki~Hashimoto}
\author[1]{Diptarama~Hendrian}
\author[2]{Dominik~K\"{o}ppl}
\author[1]{Ryo~Yoshinaka}
\author[1]{Ayumi~Shinohara}

\date{}

\affil[1]{Tohoku University, Japan}
\affil[2]{Tokyo Medical and Dental University, Japan}

\maketitle

\begin{abstract}
Parameterized strings are a generalization of strings in that their characters are drawn from two different alphabets,
where one is considered to be the alphabet of static characters and the other to be the alphabet of parameter characters.
Two parameterized strings are a parameterized match if there is a bijection over all characters such that the bijection transforms one string to the other while keeping the static characters (i.e., it behaves as the identity on the static alphabet).
Ganguly et al.~[SODA 2017] proposed the parameterized Burrows--Wheeler transform (pBWT) as a variant of 
the Burrows--Wheeler transform for space-efficient parameterized pattern matching.
In this paper, we propose an algorithm for computing the pBWT online by reading the characters of a given input string one-by-one from right to left.
Our algorithm works in $O(|\Pi| \log n / \log \log n)$ amortized time for each input character,
where $n$ and $\Pi$ denote the size of the input string and the alphabet of the parameter characters, respectively.

\date{}
\end{abstract}

\section{Introduction}
The \emph{parameterized matching problem} (\emph{p-matching problem})~\cite{PMA}
is a generalization of the classic pattern matching problem in the sense that
we here consider two disjoint alphabets, the set $\Sigma$ of \emph{static characters} and the set $\Pi$ of \emph{parameter characters}.
We call a string over $\Sigma \cup \Pi$ a \emph{parameterized string (p-string)}.
Two equal-length p-strings $X$ and $Y$ 
are said to \emph{parameterized match} (\emph{p-match})
if there is a bijection that renames the parameter characters in $X$ so $X$ becomes equal to $Y$.
The \emph{p-matching problem} is, given a text p-string $T$ and pattern p-string $P$,
to output the positions of all substrings of $T$ that p-match $P$.
The p-matching problem is motivated
by applications in the software maintenance~\cite{Baker93,PMA},
the plagiarism detection~\cite{EPM},
the analysis of gene structures~\cite{Shibuya04}, and so on.
There exist indexing structures that support p-matching,
such as parameterized suffix trees~\cite{Baker93,Shibuya04}, parameterized suffix arrays~\cite{FujisatoSA2019,TomohiroDBIT09}, and so on~\cite{Diptarama2017,Fujisato2018,Nakashima2020,Nakashima2020plst}; see also~\cite{DBLP:journals/dam/MendivelsoTP20} for a survey.
A drawback of these indexing structures is that they have high space requirements.

A more space-efficient indexing structure, the \emph{parameterized Burrows--Wheeler transform (pBWT)},
was proposed by Ganguly et al.~\cite{Ganguly2017}.
The pBWT is a variant of the \emph{Burrows--Wheeler transform (BWT)}~\cite{Burrows94ablock-sorting} that can be used as an indexing structure for p-matching using only $o(n \log n)$ bits of space.
Later on, Kim and Cho~\cite{KimC21} improved this indexing structure by changing the encoding of p-strings used for defining the pBWT\@.
Recently, Ganguly et al.~\cite{0002ST22} augmented this index with capabilities of a suffix tree while keeping the space within $o(n \log n)$ bits.
However, as far as we are aware of, none research related to the pBWT~\cite{Ganguly2017,KimC21,0002ST22,Thankachan22} has discussed how to construct their pBWT-based data structures in detail.
Their construction algorithms mainly rely on the parameterized suffix tree.
Given the parameterized suffix tree of a p-string $T$ of length $n$,
the pBWT of $T$ can be constructed in $O(n\log(|\Sigma| + |\Pi|))$ time offline.

In this paper, we propose an algorithm for constructing pBWTs and related data structures used for indexing structures of p-matching.
Our algorithm constructs the data structures directly in an online manner by reading the input text from right to left.
The algorithm uses the dynamic array data structures of Navarro and Nekrich~\cite{Navarro2014} to maintain our growing arrays.
For each character read, our algorithm takes $O(|\Pi| \log n / \log \log n)$ amortized time,
where $n$ is the size of input string.
Therefore, we can compute pBWT of a p-string $T$ of length $n$ in $O(n|\Pi| \log n / \log \log n)$ time in total.
In comparison, computing the standard BWT on a string~$T$ (i.e., the pBWT on a string having no parameter characters)
can be done in $O(n \log n / \log \log n)$ time with the dynamic array data structures~\cite{Navarro2014} (see~\cite{policriti15fast} for a description of this online algorithm).
Looking at our time complexity, the factor $|\Pi|$ also appears in the time complexity of an offline construction algorithm of parameterized suffix arrays~\cite{FujisatoSA2019} as $O(n|\Pi|)$ and a right-to-left online construction algorithm of parameterized suffix trees~\cite{Nakashima2020} as $O(n|\Pi|\log(|\Pi| + |\Sigma|))$.
This suggests it would be rather hard to improve the time complexity of the online construction of pBWT to be independent of $|\Pi|$.

\section{Preliminaries}
We denote the set of nonnegative integers by $\mbb{N}$ and let $\mbb{N}_+ = \mbb{N} \setminus \{0\}$ and $\mbb{N}_\inft = \mbb{N}_+ \cup \{\inft\}$.
The set of strings over an alphabet $A$ is denoted by $A^*$.
The empty string is denoted by $\varepsilon$.
The length of a string $W \in A^*$ is denoted by $|W|$.
For a subset $B \subseteq A$, the set of elements of $B$ occurring in $W \in A^*$ is denoted by $\rest{B}{W}$.
We count the number of occurrences of characters of $B$ in a string $W$ by $|W|_B$.
So, $|W|_A = |W|$.
When $B$ is a singleton of $b$, i.e., $B = \{b\}$, we often write $|W|_{b}$ instead of $|W|_{\{b\}}$.
When $W$ is written as $W = XYZ$, $X$, $Y$, and $Z$ are called 
\emph{prefix}, \emph{factor}, and \emph{suffix} of $W$, respectively.
The $i$-th character of $W$ is denoted by $W[i]$ for $1 \leq i \leq |W|$.
The factor of $W$ that begins at position $i$ and ends at position $j$ 
is $W[i:j]$ for $1 \leq i \leq j \leq |W|$.
For convenience,
we abbreviate $W[1:i]$ to $W[{}:i]$ and $W[i:|W|]$ to $W[i:{}]$ for $1 \leq i \leq |W|$.
Let  $\Rot(W,0) = W$ and $\Rot(W,i+1) = \Rot(W,i)[|W|]\Rot(W,i)[:|W|-1]$ be the $i$-th right rotation of $W$.
Note that $\Rot(W,i) = \Rot(W, i +|W|)$.
For convenience we denote $W_i = \Rot(W,i)$.
Let $\CL_W(a)$ and $\CR_W(a)$ be the leftmost and rightmost positions of a character $a \in A$ in $W$, respectively.
If $a$ does not occur in $W$, define $\CL_W(a) = \CR_W(a) = 0$.

\subsection{Parameterized Burrows--Wheeler transform}

Throughout this paper, we fix two disjoint ordered alphabets $\Sigma$ and $\Pi$.
We call elements of $\Sigma$ \emph{static characters}
and those of $\Pi$ \emph{parameter characters}.
Elements of $\Sigma^{*}$ and $(\Sigma \cup \Pi)^{*}$ are called \emph{static strings} and \emph{parameterized strings} (or \emph{p-strings} for short), respectively.

Two p-strings $S$ and $T$ of the same length are a \emph{parameterized match} \emph{(p-match)},
 denoted by $S \approx T$,
if there is a bijection $f$ on $\Sigma \cup \Pi$ such that $f(a) = a$ for any $a \in \Sigma$ and $f(S[i]) = T[i]$ for all $1 \leq i \leq |T|$~\cite{PMA}.
We use Kim and Cho's version of p-string encoding~\cite{KimC21}, which replaces $0$ in Baker's encoding~\cite{Baker93} by $\inft$.
The \emph{prev-encoding} $\lrangle{T}$ of $T$ is the string over $\Sigma \cup \mbb{N}_\inft$ of length $|T|$ defined by
\begin{align*}
	\lrangle{T}[i] = 
	\begin{cases}
		T[i] & \text{if } T[i] \in \Sigma , \\
		\inft          & \text{if $T[i] \in \Pi$ and $\CR_{T[:i-1]}(T[i]) = 0$,} \\
		i-\CR_{T[:i-1]}(T[i])  		& \text{if  $T[i] \in \Pi$ and $\CR_{T[:i-1]}(T[i]) \ne 0$} 
	\end{cases}
\end{align*}
for $1 \le i \le |T|$.
When $T[i] \in \Pi$, $\lrangle{T}[i]$ represents the distance between $i$ and the previous occurrence position of the same parameter character.
If $T[i]$ does not occur before the position $i$, the distance is assumed to be $\inft$.
We call a string $W \in (\Sigma \cup \mbb{N}_\inft)^*$ a \emph{pv-string} if $W = \lrangle{T}$ for some p-string $T$.
For any p-strings $S$ and $T$,  $S \approx T$ if and only if $\lrangle{S}=\lrangle{T}$~\cite{PMA}.
For example, given $\Sigma = \{\mathtt{a}, \mathtt{b}\}$ and $\Pi = \{\mathtt{u}, \mathtt{v}, \mathtt{x}, \mathtt{y}\}$,
$S = \mathtt{uvvauvb}$ and $T = \mathtt{xyyaxyb}$ are a p-match 
by $f$ with $f(\mathtt{u})=\mathtt{x}$ and $f(\mathtt{v})=\mathtt{y}$,
where $\lrangle{S} = \lrangle{T} = \inft \inft 1\mathtt{a}43\mathtt{b}$.

For defining pBWT, we use another encoding $\llrrbracket{T}$ given by
\begin{align*}
	\llrrbracket{T}[i] = 
	\begin{cases}
		T[i] & \text{if } T[i] \in \Sigma , \\
		| \rest{\Pi}{T_{n-i}[1:\CL_{T_{n-i}}(T[i])]} |  		& \text{if } T[i] \in \Pi
	\end{cases}
\end{align*}
for $1 \le i \le |T|$.
When $T[i] \in \Pi$, $\llrrbracket{T}[i]$ counts the number of distinct parameter characters in $T$ between $i$ and the next occurrence of $T[i]$, if $T[i]$ occurs after $i$.
If $i$ is the rightmost occurrence position of $T[i]$, then we continue counting parameter characters from the left end to the right until we find $T[i]$.
Since $T[i]$ occurs in $T_{n-i}$ as the last character, $\llrrbracket{T}[i]$ cannot be zero.
Note that $\llrrbracket{\Rot(T,i)} = \Rot(\llrrbracket{T},i)$ by definition.
It is not hard to see that for any p-strings $S$ and $T$,  $S \approx T$ if and only if $\llrrbracket{S}=\llrrbracket{T}$ (see \Cref{lem:llrrbracketEquiv} in the appendix).
For example, the two strings $S$ and $T$ given above are encoded as $\llrrbracket{S} = \llrrbracket{T} = 212\mathtt{a}22\mtt{b}$.

Hereafter in this section, we fix a p-string $T$ of length $n$ which ends with a special static character ${\$}$ which occurs nowhere else in $T$.
We extend the linear order over $\Sigma$ to $\Sigma \cup \mbb{N}_\inft$ by letting $\$ < a < i < \inft$ for any $a \in \Sigma \setminus\{\$\}$ and $i \in \mbb{N}_+$.
The order over $\mbb{N}_+$ coincides with the usual numerical order. 

The pBWT of $T$ is defined through sorting $\llrrbracket{T_p}$ for $p=1,\dots,n$ using $\lrangle{T_p}$ as keys. 
\begin{definition}[Parameterized rotation array]
	\emph{The parameterized rotation array} $\RA_T$ of $T$ is an array of size $n$ such that $\RA_T[i]=p$ with $1 \le p \le n$ if and only if 
	$\lrangle{T_p}$ is the $i$-th lexicographically smallest string in $\{\,\lrangle{T_p} \mid 1 \le p \le n\,\}$.
	We denote its inverse by $\RA_T^{-1}$, i.e., $\RA_T^{-1}[p]=i$ iff $\RA_T[i]=p$.
\end{definition}
Note that $\RA_T$ and $\RA^{-1}_T$ are well-defined and bijective due to the presence of ${\$}$ in $T$.
Here, we have $\RA_T[i] =  n - \mathit{pSA}_T[i] + 1$,
where $\mathit{pSA}_T$ refers to the suffix array $\mathtt{pSA}_\inft$ in~\cite{KimC21}. 
The array gives an $n \times n$ square matrix $(\llrrbracket{T_{\RA_T[i]}})_{i=1}^n$, which we call the \emph{rotation sort matrix} of $T$, whose $(i,p)$ entry is $\llrrbracket{T_{\RA_T[i]}}[p]$.
The pBWT of $T$ is formed by the characters in the last column of the matrix.
\begin{definition}[pBWT~\cite{KimC21}]
	\emph{The parameterized Burrows--Wheeler transform (pBWT)} of a p-string $T$, denoted by $\PBWTF(T)$, is a string of length $n$ such that $\PBWTF(T)[i] = \llrrbracket{T_{\RA[i]}}[n]$.
\end{definition}
An example pBWT can be found in \Cref{table:PBWT}.
We will use $\PBWT_T$ as a synonym of $\PBWTF(T)$, since it represents the \emph{last} column of the matrix $(\llrrbracket{T_{\RA_T[i]}})_{i=1}^n$.
When picking up the characters from the \emph{first} column, we obtain another array $\First_T$.
That is,  $\First_T[i]=\llrrbracket{T_{\RA_T[i]}}[1]$ for all $i \in \{1,\dots,n\}$.
Those arrays $\PBWT_T$ and $\First_T$ are ``linked'' by the following mapping.
\begin{definition}[LF mapping]
	\emph{The LF mapping} $\mathsf{\LF}_T: \{1,\dots,n\} \rightarrow \{1,\dots,n\}$ for $T$ is defined as $\mathsf{\LF}_T(i) = j$ if $T_{\RA[i]+1} = T_{\RA[j]}$.
\end{definition}
By rotating $T_p$ to the right by one, the last character moves to the first position in $T_{p+1}$.
Roughly speaking, $\PBWT_T[i]$ and $\First_T[\LF_T(i)]$ ``originate'' in the same character occurrence of $T$, which implies $\PBWT_T[i]=\First_T[\LF_T(i)]$ in particular.
One can recover $\llrrbracket{T}$ as $\llrrbracket{T}[p]=\PBWT_T[\LF^{-p}(\Pos)]]$ for $1 \le p \le n$ where $\Pos = \RA_T^{-1}[n]$.
$\PBWT_T$, $\First_T$, and $\LF_T$ are used for pattern matching based on pBWT.
See~\cite{KimC21} for the details.

\begin{table}[t]
	\caption{The pBWT $\PBWTF(T) = \PBWT_T = \mtt{a}33131\$22\mtt{aa}$ of the example string $T=\mathtt{xayzzazyza\$}$ with related arrays, where $\Sigma = \{\mtt{a}\}$ and $\Pi = \{\mtt{x},\mtt{y},\mtt{z}\}$.}
	\label{table:PBWT}
	\centering
	\scalebox{\scl}{
	\begin{tabular}{|c||c|c||c|c|c|c|c|c|}
		\hline
		$i$ & $T_i$ & $\lrangle{T_i}$ & $\RA_T[i]$ & $\LCP_T[i]$ & $\lrangle{T_{\RA_T[i]}}$ & $\First_T[i]$ & $\llrrbracket{T_{\RA_T[i]}}$ & $\PBWT_T[i]$ \\
		\hline
		  1 & $\mathtt{\$xayzzazyza}$ & $\$\ift\tta\ift\ift1\tta252\tta$ &  1 & 0 & $\$\ift\tta\ift\ift1\tta252\tta$ & $\$$   & $\$3\tta211\tta233\tta$ & $\tta$ \\
		  2 & $\mathtt{a\$xayzzazyz}$ & $\tta\$\ift\tta\ift\ift1\tta252$ &  2 & 0 & $\tta\$\ift\tta\ift\ift1\tta252$ & $\tta$ & $\tta\$3\tta211\tta233$ & $3$    \\
		  3 & $\mathtt{za\$xayzzazy}$ & $\ift\tta\$\ift\tta\ift61\tta25$ & 10 & 2 & $\tta\ift\ift1\tta252\tta\$\ift$ & $\tta$ & $\tta211\tta233\tta\$3$ & $3$    \\
		  4 & $\mathtt{yza\$xayzzaz}$ & $\ift\ift\tta\$\ift\tta661\tta2$ &  6 & 0 & $\tta\ift\ift2\tta\$\ift\tta661$ & $\tta$ & $\tta233\tta\$3\tta211$ & $1$    \\
		  5 & $\mathtt{zyza\$xayzza}$ & $\ift\ift2\tta\$\ift\tta661\tta$ &  3 & 1 & $\ift\tta\$\ift\tta\ift61\tta25$ & $3$    & $3\tta\$3\tta211\tta23$ & $3$    \\
		  6 & $\mathtt{azyza\$xayzz}$ & $\tta\ift\ift2\tta\$\ift\tta661$ &  7 & 1 & $\ift\tta2\ift2\tta\$\ift\tta66$ & $1$    & $1\tta233\tta\$3\tta21$ & $1$    \\
		  7 & $\mathtt{zazyza\$xayz}$ & $\ift\tta2\ift2\tta\$\ift\tta66$ & 11 & 1 & $\ift\tta\ift\ift1\tta252\tta\$$ & $3$    & $3\tta211\tta233\tta\$$ & $\$$   \\
		  8 & $\mathtt{zzazyza\$xay}$ & $\ift1\tta2\ift2\tta\$\ift\tta6$ &  8 & 1 & $\ift1\tta2\ift2\tta\$\ift\tta6$ & $1$    & $11\tta233\tta\$3\tta2$ & $2$    \\
		  9 & $\mathtt{yzzazyza\$xa}$ & $\ift\ift1\tta252\tta\$\ift\tta$ &  4 & 2 & $\ift\ift\tta\$\ift\tta661\tta2$ & $3$    & $33\tta\$3\tta211\tta2$ & $2$    \\
		 10 & $\mathtt{ayzzazyza\$x}$ & $\tta\ift\ift1\tta252\tta\$\ift$ &  9 & 2 & $\ift\ift1\tta252\tta\$\ift\tta$ & $2$    & $211\tta233\tta\$3\tta$ & $\tta$ \\
		 11 & $\mathtt{xayzzazyza\$}$ & $\ift\tta\ift\ift1\tta252\tta\$$ &  5 & 0 & $\ift\ift2\tta\$\ift\tta661\tta$ & $2$    & $233\tta\$3\tta211\tta$ & $\tta$ \\
		 \hline
	\end{tabular}
	}
\end{table}

Our pBWT construction algorithm maintains neither $\RA_T$ nor $\LF_T$, but involves some helper data structures in addition to $\PBWT_T$ and $\First_T$.
Among those, the array $\LCP_T$ is worth explaining before going into the algorithmic details.
For two pv-strings $X$ and $Y$, let $\Lcp(X,Y) = |W|_\inft$ be the number of $\inft$'s in the longest common prefix $W$ of $X$ and $Y$.
The following array counts the number of $\inft$'s in the longest common prefixes of two adjacent rows in $(\lrangle{T}_{\RA_T[i]})_{i=1}^n$.
\begin{definition}[$\inft$-LCP array]
	\emph{The $\inft$-LCP array} $\LCP_T$ of $T$ is an array of size $n$ such that $\LCP_T[n] = 0$ and $\LCP_T[i] = \Lcp(\lrangle{T_{\RA_T[i]}}, \lrangle{T_{\RA_T[i+1]}})$ for $1 \le i < n$.
\end{definition}
\Cref{table:PBWT} shows an example of a pBWT and related (conceptual) data structures.
We can compute $\Lcp(\lrangle{T_{\RA_T[i]}}, \lrangle{T_{\RA_T[j]}})$ using $\LCP_T$ as follows.
\begin{lemma}\label{lem:lcpmin}
	For $1 \le i < j \le n$,
	$\Lcp(\lrangle{T_{\RA_T[i]}}, \lrangle{T_{\RA_T[j]}}) = \min_{i \le k <j}\LCP_{T}[k]$.
\end{lemma}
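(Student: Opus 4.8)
The plan is to reduce the statement to the classical ``minimum'' characterization of longest common prefixes among lexicographically sorted strings, and then to transfer that characterization from the ordinary string-LCP to the $\inft$-counting statistic $\Lcp$.

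First I would record the structural fact underlying everything: by the definition of $\RA_T$ the rows are lexicographically sorted, i.e.\ $\lrangle{T_{\RA_T[1]}} \le \lrangle{T_{\RA_T[2]}} \le \dots \le \lrangle{T_{\RA_T[n]}}$ (in fact strictly, since $\$$ occurs exactly once, so the $n$ rotations have pairwise distinct prev-encodings). Abbreviating $A_k := \lrangle{T_{\RA_T[k]}}$, and writing $\mathrm{lcp}(X,Y)$ for the \emph{string} longest common prefix (so that $\Lcp(X,Y) = |\mathrm{lcp}(X,Y)|_\inft$ by the definition of $\Lcp$), the goal becomes $\Lcp(A_i,A_j) = \min_{i \le k < j} \Lcp(A_k,A_{k+1})$.

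The core is a three-string lemma: for lexicographically ordered pv-strings $X \le Y \le Z$, the string $\mathrm{lcp}(X,Z)$ is exactly the shorter of $\mathrm{lcp}(X,Y)$ and $\mathrm{lcp}(Y,Z)$. I would prove this by the usual case analysis. Let $P = \mathrm{lcp}(X,Y)$ and $Q = \mathrm{lcp}(Y,Z)$ and assume without loss of generality $|P| \le |Q|$; then $P$ is a prefix of $Q$, both being prefixes of $Y$. Since $X$ shares $P$ with $Y$ and $Z$ shares $Q \supseteq P$ with $Y$, both $X$ and $Z$ carry $P$ as a prefix, so $\mathrm{lcp}(X,Z) \ge |P|$. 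Conversely, $X$ and $Z$ cannot agree at position $|P|+1$: we have $X[|P|+1] < Y[|P|+1]$ because $X \le Y$ and $P$ is their maximal common prefix, while $Y[|P|+1] \le Z[|P|+1]$ (either $|Q|>|P|$, giving equality, or $|Q|=|P|$, giving strict inequality from $Y \le Z$), whence $X[|P|+1] < Z[|P|+1]$. Thus $\mathrm{lcp}(X,Z) = P$.

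The step that transfers this to $\Lcp$ is the observation that $|\cdot|_\inft$ is monotone under the prefix order: if $U$ is a prefix of $V$ then $|U|_\inft \le |V|_\inft$. Since the shorter of $P,Q$ is a prefix of the longer, its $\inft$-count is the smaller of the two, so the three-string lemma upgrades to $\Lcp(X,Z) = \min\bigl(\Lcp(X,Y),\Lcp(Y,Z)\bigr)$. A straightforward induction on $j-i$, applying this identity with $X = A_i$, $Y = A_{j-1}$, $Z = A_j$, then yields the claim. The main obstacle I anticipate is precisely this passage from the string LCP to the $\inft$-count: for an arbitrary functional of the common prefix, $\min$ need not commute with sorting. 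What saves us is that among sorted strings the \emph{shortest} consecutive LCP is a prefix of all the others, so the monotone statistic $|\cdot|_\inft$ necessarily attains its minimum on that shortest prefix; I would make both the monotonicity and this prefix-chain structure explicit, since that is exactly where the argument would silently fail if $\Lcp$ were replaced by a non-monotone quantity.
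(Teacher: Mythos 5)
Your proof is correct. The paper states \Cref{lem:lcpmin} without proof, treating it as the standard ``range minimum'' property of LCP arrays over lexicographically sorted rows; your argument --- the three-string lemma $\mathrm{lcp}(X,Z)=\min(\mathrm{lcp}(X,Y),\mathrm{lcp}(Y,Z))$ for sorted $X\le Y\le Z$, upgraded to the $\inft$-count via monotonicity of $|\cdot|_\inft$ under the prefix order --- is exactly the justification the authors are implicitly relying on, and you correctly isolate the one point (the shorter consecutive lcp being a prefix of the others, so that a monotone statistic inherits the minimum) that makes the transfer from string length to $\inft$-count legitimate.
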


Kim and Cho~\cite{KimC21} showed some basic relations among $\PBWT_T$, $\LF_T$, and $\Lcp$.
We rephrase Lemma~3 of~\cite{KimC21} into a form convenient for our discussions.
\begin{lemma}\label{lem:cross}
	Consider $i$ and $j$ with $1 \le i < j \le n$ and $T_{\RA_T[i]}[n] , T_{\RA_T[j]}[n] \in \Pi$.
	Then, $\LF_T(i) < \LF_T(j)$ iff 
	\(
	 \min\{ \PBWT_T[i] - 1, \ \Lcp(\lrangle{T_{\RA_T[i]}},\lrangle{T_{\RA_T[j]}})\} < \PBWT_T[j]
	\).
\end{lemma}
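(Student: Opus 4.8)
The plan is to reduce the statement to a lexicographic comparison of the prev-encodings of the once-more-rotated strings, and then to track precisely how a single right rotation alters a prev-encoding. First, since $\RA_T$ orders rotations by their prev-encodings and $\LF_T(i)=\RA_T^{-1}[\RA_T[i]+1]$ is the rank of $T_{\RA_T[i]+1}$, we have $\LF_T(i)<\LF_T(j)$ iff $\lrangle{T_{\RA_T[i]+1}}$ is lexicographically smaller than $\lrangle{T_{\RA_T[j]+1}}$. So it suffices to compare these two prev-encodings.

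Next I would determine the shape of $\lrangle{T_{\RA_T[i]+1}}$. Write $R=T_{\RA_T[i]}$ and let $c=R[n]\in\Pi$; one right rotation moves $c$ to the front. Prepending $c$ changes the encoding only at the first occurrence of $c$ in $R$, namely position $p=\CL_R(c)$, whose entry turns from $\inft$ into the finite distance $p$, while every other entry keeps its value. Hence $\lrangle{T_{\RA_T[i]+1}}=\inft\cdot W_i$, where $W_i$ is $\lrangle{R}[1:n-1]$ with the $\inft$ at position $p$ replaced by the finite value $p$ (and no replacement when $p=n$). By the definition of $\llrrbracket{\cdot}$, $\PBWT_T[i]=\llrrbracket{R}[n]=|\rest{\Pi}{R[1:p]}|$ equals the number of $\inft$'s in $\lrangle{R}[1:p]$, and since $\lrangle{R}[p]=\inft$ this is exactly the $\PBWT_T[i]$-th $\inft$ of $\lrangle{R}$. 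Thus one rotation replaces the $\PBWT_T[i]$-th $\inft$ of $\lrangle{R}$ by a finite, hence strictly smaller, symbol.

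I would then compare $W_i$ and $W_j$ after stripping the common leading $\inft$. Set $q_i=\PBWT_T[i]$, $q_j=\PBWT_T[j]$, $\ell=\Lcp(\lrangle{T_{\RA_T[i]}},\lrangle{T_{\RA_T[j]}})$, and let $m$ be the length of the longest common prefix of the two prev-encodings, so this prefix holds exactly $\ell$ copies of $\inft$ and the two strings differ at position $m+1$ with the $i$-entry smaller (as $i<j$). The only deviations of $W_i,W_j$ from the original prefixes are the finite replacements at the $q_i$-th and $q_j$-th $\inft$, and such a replacement sits inside the shared prefix iff its index is at most $\ell$ (the first $\ell$ copies of $\inft$ occupy identical positions $\le m$ in both strings). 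Since a replacement lowers $\inft$, the maximum symbol, to a finite value, the party whose replacement fires earliest inside $[1,m]$ is pulled below the other; if neither resolves the comparison inside $[1,m]$, the genuine difference at position $m+1$ prevails and yields $W_i<W_j$. A short case split on whether $q_i,q_j\le\ell$ shows that $W_j<W_i$ holds exactly when $q_j\le\ell$ and $q_j<q_i$, that is $q_j\le\min\{q_i-1,\ell\}$; negating and using that distinct rotations force $W_i\neq W_j$ gives $W_i<W_j \iff \min\{\PBWT_T[i]-1,\ell\}<\PBWT_T[j]$, which is the claim.

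The main obstacle is the bookkeeping in this last step: I must verify the incidence facts linking $\PBWT$-indices, counts of $\inft$, and LCP positions, in particular that a replacement with index $\le\ell$ really lands at a position $\le m$ carrying $\inft$ in both strings, so that it creates the asserted difference. I must also handle the boundary at position $m+1$, checking that when a replacement falls on the $(\ell+1)$-th $\inft$ there its value, at most $m$, stays below the competing entry, and that the degenerate case $m=n-1$ cannot fall into the deferred-to-$m+1$ branch but instead forces a replacement-induced difference within the prefix. Once these incidences are pinned down, the clean min-formula follows from the two-way case analysis.
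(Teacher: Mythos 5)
The paper itself does not prove this lemma---it is imported as a rephrasing of Lemma~3 of Kim and Cho~\cite{KimC21}---so there is no in-paper argument to compare against; your proposal is a self-contained first-principles derivation, which is more than the paper offers. The reduction to the lexicographic comparison of $\lrangle{T_{\RA_T[i]+1}}$ and $\lrangle{T_{\RA_T[j]+1}}$ is exactly right, and so is the key structural observation: one right rotation of $R=T_{\RA_T[i]}$ prepends one $\inft$, drops the last symbol, and lowers exactly one entry of $\lrangle{R}$ from $\inft$ to a finite value, namely the $\PBWT_T[i]$-th $\inft$, since $\PBWT_T[i]=|\rest{\Pi}{R[1:\CL_R(c)]}|$ counts the $\inft$'s of $\lrangle{R}$ up to and including position $\CL_R(c)$. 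Combined with the fact that the first $\ell$ copies of $\inft$ occupy identical positions $\le m$ in both encodings, your case split does yield $W_j<W_i$ exactly when $q_j\le\ell$ and $q_j<q_i$, whose negation is the stated min-formula. Two of the loose ends you flag resolve cleanly and in your favour. First, since $T_p[p]=\$$ and the rotations in the lemma are not $T_n$, the common prefix length satisfies $m\le\min\{p,q\}-1\le n-2$, so position $m+1$ always survives the truncation to length $n-1$ and the degenerate case you worry about cannot occur. Second, in the boundary sub-case $q_j=\ell+1$ with $B[m+1]=\inft$, the replaced entry of $W_j$ at position $m+1$ has value exactly $m+1$---not ``at most $m$'' as you wrote---and it lies \emph{above} the competing entry $W_i[m+1]=A[m+1]$, which is either in $\Sigma$ or a distance of value at most $m$; that is the inequality you actually need to conclude $W_i<W_j$, so the direction in your sentence is inverted, but the case closes correctly once computed. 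With those two points pinned down the argument is complete and correct.
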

\begin{corollary}[{\cite{KimC21}}]\label{cor:Kim4}
	If $i < j$ and $\PBWT_T[i] = \PBWT_T[j]$, then $\LF_T(i) < \LF_T(j)$.
\end{corollary}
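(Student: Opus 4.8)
The plan is to reduce to \Cref{lem:cross} where it applies and to settle the remaining case directly by a lexicographic comparison. Write $c = \PBWT_T[i] = \PBWT_T[j]$. By the definition of $\llrrbracket{\cdot}$, the value $c$ is either a static character in $\Sigma$ or a positive integer, the latter occurring exactly when the corresponding last characters $T_{\RA_T[i]}[n]$ and $T_{\RA_T[j]}[n]$ are parameter characters.

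Suppose first that $c$ is a positive integer. Then both last characters lie in $\Pi$, so \Cref{lem:cross} applies to the pair $i<j$. Since $\PBWT_T[i]-1 < \PBWT_T[i] = \PBWT_T[j]$, the minimum on the left-hand side of the stated equivalence is at most $\PBWT_T[i]-1$ and hence strictly smaller than $\PBWT_T[j]$; the forward direction of \Cref{lem:cross} then yields $\LF_T(i) < \LF_T(j)$. This case is essentially immediate.

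The work lies in the static case $c = a \in \Sigma$, which \Cref{lem:cross} does not cover. Here $T_{\RA_T[i]}[n] = T_{\RA_T[j]}[n] = a$, so the rotation $T_{\RA_T[i]+1}$ prepends the static character $a$ to $T_{\RA_T[i]}[1:n-1]$. The key identity I would establish is
\[
\lrangle{T_{\RA_T[\LF_T(i)]}} = a \cdot \lrangle{T_{\RA_T[i]}}[1:n-1],
\]
and likewise with $i$ replaced by $j$. This rests on two observations: prepending a static character shifts the current position and the previous-occurrence position of every parameter character by the same amount, so it leaves each recorded distance unchanged, and forming a prev-encoding commutes with truncation to a prefix. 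With the identity in hand, order preservation is a single lexicographic step: by the definition of $\RA_T$ we have $\lrangle{T_{\RA_T[i]}} < \lrangle{T_{\RA_T[j]}}$, and since these two length-$n$ strings share the same last character $a$, their first disagreement must occur within positions $1$ through $n-1$; hence $\lrangle{T_{\RA_T[i]}}[1:n-1] < \lrangle{T_{\RA_T[j]}}[1:n-1]$, and prepending the common $a$ preserves this inequality, giving $\lrangle{T_{\RA_T[\LF_T(i)]}} < \lrangle{T_{\RA_T[\LF_T(j)]}}$, i.e.\ $\LF_T(i) < \LF_T(j)$. The main obstacle is verifying this prev-encoding identity cleanly; once it is secured, both cases close at once.
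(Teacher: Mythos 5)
Your proof is correct, but the paper itself gives no argument at all: it states the claim as a corollary attributed to Kim and Cho~\cite{KimC21}, placed immediately after \Cref{lem:cross}, and the implicit derivation covers only the situation that \Cref{lem:cross} actually addresses, namely when both last characters lie in $\Pi$. Your first case is exactly that implicit derivation (the minimum is at most $\PBWT_T[i]-1<\PBWT_T[j]$, so the forward direction of \Cref{lem:cross} applies), and your second case supplies the missing static half that the paper delegates to the citation. The static argument is sound: the identity $\lrangle{aW}=a\,\lrangle{W}$ for $a\in\Sigma$ holds because prepending a static character increments both the current position and every previous-occurrence position of a parameter character, leaving all recorded distances (and all $\inft$'s) unchanged, and $\lrangle{W}[1:m]=\lrangle{W[1:m]}$ holds because the prev-encoding at a position depends only on the prefix up to that position; combining these with $\Rot(T_{\RA_T[i]},1)=a\cdot T_{\RA_T[i]}[1:n-1]$ gives your key identity. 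Since the two prev-encodings are distinct (guaranteed by the unique $\$$) and agree in their last character $a$, their first disagreement indeed falls in positions $1$ through $n-1$, so prepending the common $a$ preserves the strict order and $\LF_T(i)<\LF_T(j)$ follows from the definition of $\RA_T$. The only cosmetic remark is that the degenerate value $c=\$$ cannot occur with $i\neq j$, so your case analysis ($c\in\mathbb{N}_+$ versus $c\in\Sigma$) is exhaustive. What your write-up buys over the paper is a self-contained verification of the corollary in the notation of this paper rather than an appeal to~\cite{KimC21}.
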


To maintain $\PBWT_T$, $\First_T$, and $\LCP_{T}$ dynamically, our algorithm uses the data structure for dynamic arrays by Navarro and Nekrich~\cite{Navarro2014} that supports the following operations on an array $Q$ of size $m$ in $O(\frac{\log m}{\log \log m})$ amortized time.
\begin{enumerate}
	\item $\Access(Q,i)$: returns $Q[i]$ for $1 \le i \le m$;
	\item $\Rank_a(Q,i)$: returns $|Q[{}:i]|_a$ for $1 \le i \le m$;
	\item $\Select_a(Q,i)$: returns $i$-th occurrence position of $a$ for $1 \le i \le \Rank_a(Q,m)$;
	\item $\Insert_a(Q,i)$: inserts $a$ between $Q[i-1]$ and $Q[i]$ for $1 \le i \le m+1$;
	\item $\Delete(Q,i)$: deletes $Q[i]$ from $Q$ for $1 \le i \le m$.
\end{enumerate}

\Cref{cor:Kim4} implies that we can compute $\LF_T(i)$ and its inverse $\LF_T^{-1}(j)$ by
\begin{align*}
	\LF_T(i) = \Select_{x}(\First_T, \Rank_{x}(\PBWT_T, i))  &\quad\text{where}\quad x = \PBWT_T[i],
\\	\LF_T^{-1}(j) = \Select_{y}(\PBWT_T, \Rank_{y}(\First_T, j)) &\quad\text{where}\quad y = \First_T[j].
\end{align*}

\section{Computing pBWT online}
This section introduces our algorithm computing $\PBWTF_T$ in an online manner by reading a p-string $T$ from right to left.
Let $T=cS$ for $c \in \Sigma \cup \Pi \setminus \{\$\}$ and $n=|S|\ge 1$.
We consider updating $\PBWT_S$ to $\PBWT_{T}$.
Hereafter, we assume that $\Sigma$ is known and $|\Sigma| \le |T|$ as in~\cite{policriti15fast}.
Among the rows of the rotation matrices of $S$ and $T$, the rows of $\llrrbracket{S}=\llrrbracket{S_n}$ and $\llrrbracket{T}=\llrrbracket{T_{n+1}}$ play important roles when updating.
Let $\Dol = \RA_{S}^{-1}[n]$ and $\Pos = \RA_{T}^{-1}[n+1]$.
We note that $\PBWT_S[\Dol] = \PBWT_T[\Pos] = \$$.

First, we observe $\RA_T$ is obtained from $\RA_{S}$ just by ``inserting'' $n+1$ at $k_T$. 
\begin{lemma}\label{cor:RAT}
For $1 \le i \le n+1$,
	\[
	\RA_T[i] = \begin{cases}
		\RA_S[i]	&\text{if $i < \Pos$,}
\\		n+1		&\text{if $i = \Pos$,}
\\		\RA_S[i-1]	&\text{if $i > \Pos$.}
	\end{cases}
	\]	
\end{lemma}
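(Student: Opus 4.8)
The plan is to show that, aside from the single new rotation $T_{n+1} = T$, the rotations of $T$ are in an order-preserving bijection with the rotations of $S$ via $T_p \leftrightarrow S_p$, so that the sorted list of prev-encodings for $T$ is exactly the sorted list for $S$ with one extra element inserted at rank $\Pos = \RA_T^{-1}[n+1]$. Reading off the rotation indices then yields the three cases. Note first that $T_{n+1} = \Rot(T,n+1) = \Rot(T,0) = T = cS$, so $\Pos$ is indeed the rank of this ``new'' rotation.

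First I would exploit that $\$$ occurs exactly once in each of $S$ and $T$, namely at their last positions. For a rotation $W_p$ of such a string, call the part up to and including its unique $\$$ its \emph{decisive prefix}. Since $\$$ is the minimum symbol, since $\lrangle{\cdot}$ fixes $\$$ (it is static, so no parameter character can prev-encode to $\$$), and since $\lrangle{W_p}[i]$ depends only on $W_p[1:i]$, the lexicographic comparison of the prev-encodings of any two rotations is settled within their decisive prefixes: at the position of the shorter decisive prefix's $\$$, that encoding is $\$$ while the other is a non-$\$$ symbol, unless a difference occurred even earlier. This is precisely the observation underlying the well-definedness of $\RA$, and it is the crux of the argument.

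Next I would compute, for $1 \le p \le n$, that $T_p = \Rot(T,p)$ places its $\$$ at position $p$ and has decisive prefix $T[n+2-p:n+1]$; using $T[j] = S[j-1]$ for $j \ge 2$ (valid since $n+2-p \ge 2$), this equals $S[n+1-p:n]$. The rotation $S_p = \Rot(S,p)$ likewise places its $\$$ at position $p$, with decisive prefix $S[n-p+1:n]$, which is literally the same string. Hence $T_p[1:p] = S_p[1:p]$, and by prefix-locality of prev-encoding, $\lrangle{T_p}$ and $\lrangle{S_p}$ agree on their first $p$ entries. Combining this with the previous observation, comparing $\lrangle{T_p}$ against $\lrangle{T_q}$ performs exactly the same computation on the same symbols as comparing $\lrangle{S_p}$ against $\lrangle{S_q}$ for all $1 \le p,q \le n$; therefore $\lrangle{T_p} < \lrangle{T_q} \iff \lrangle{S_p} < \lrangle{S_q}$, and the two orders coincide. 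It follows that the sorted sequence $(\lrangle{T_{\RA_T[i]}})_i$ is obtained from $(\lrangle{S_{\RA_S[i]}})_i$ by inserting the single element $\lrangle{T_{n+1}}$ at rank $\Pos$, which gives $\RA_T[i] = \RA_S[i]$ for $i < \Pos$, $\RA_T[\Pos] = n+1$, and $\RA_T[i] = \RA_S[i-1]$ for $i > \Pos$.

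The step I expect to require the most care is the second paragraph: rigorously verifying that the lexicographic comparison of two prev-encodings is decided strictly within the decisive prefix, and that transplanting the first $p$ entries from $S_p$ to $T_p$ is legitimate. For ordinary strings this is routine, but for prev-encodings one must explicitly use that $\$$ is unique and minimal, that no parameter character can encode to $\$$, and that $\lrangle{\cdot}[i]$ depends only on the length-$i$ prefix; once these facts are nailed down, everything else is bookkeeping on rotation indices.
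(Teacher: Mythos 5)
Your proof is correct and takes essentially the same route as the paper: the paper's Lemma~\ref{lem:RAT} writes $S_i=U\$V$, $S_j=XY\$Z$ and observes that $\lrangle{U\$}\neq\lrangle{X}$ because the unique $\$$ lands at different positions, so the comparison is decided on a prefix that is unchanged when $c$ is inserted after the $\$$ — exactly your ``decisive prefix'' argument. Your write-up just makes the prefix-locality of $\lrangle{\cdot}$ and the minimality of $\$$ more explicit.
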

In the BWT, where $S$ and $T$ have no parameter characters, this implies that $\PBWT_{T}[i]=T_{\RA_T[i]}[n+1]=S_{\RA_S[i]}[n]=\PBWT_{S}[i]$ for $i < \Pos$ and $\PBWT_{T}[i+1]=T_{\RA_T[i+1]}[n+1]=S_{\RA_S[i]}[n]=\PBWT_{S}[i]$ for $i > \Pos$, except when $i=\Dol$.
Therefore, for computing $\PBWT_{T}$ from $\PBWT_S$, we only need to update $\PBWT_S[\Dol]=\$$ to $c$ and to find the position $\Pos=\RA_{T}^{-1}[n+1]$ where $\$$ should be inserted.
However in the pBWT, $\RA_{T}[i]=\RA_S[i]$ does not necessarily imply that the values $\PBWT_{T}[i]=\llrrbracket{T_{\RA_{T}[i]}}[n+1]$ and $\PBWT_{S}[i]=\llrrbracket{S_{\RA_S[i]}}[n]$ coincide, since it is not always true that $\llrrbracket{S} = \llrrbracket{T}[2:{}]$.
So we also need to update the values of the encoding.

\begin{algorithm2e}[t!]
	\caption{PBWT update algorithm}
	\label{alg:updateall}
	\SetVlineSkip{0.5mm}
	\Fn{$\UpdateAll(c, n, \PBWT, \First, \SFR, \SFL, \RM, \SFC, \LCP)$}{
		$k = \Select_{\$}(\PBWT, 1)$\label{ln:dol}\tcp*{$= k_S$}
		$\PBWT, \First, \SFR, \SFL, \RM = \UpdateLF(c, n, \PBWT, \First, \SFR, \SFL, \RM, k)$%
		\label{ln:updateLF}\tcp*{$=\PBWT^\circ_T,\First^\circ_T,\SFR_T,\SFL_T,\RM^\circ_T$}
		$\PBWT, \First, \SFC, k' = \InsertDol(n, \PBWT, \First, \SFC, k)$%
		\label{ln:insertion}\tcp*{$=\PBWT_T,\First_T,\SFC_T,k_T$}
		\ForEach{$a \in \Pi$}{%
			\textbf{if} $\RM[a] \ge k'$ \textbf{then}
				$\RM[a] = \RM[a] + 1$\tcp*{$=\RM_T[a]$}
		}
		$x = \UpdateLCP(\PBWT, \First, \LCP, k')$%
		\tcp*{$=\LCP_T[k_T]$}
		$\LCP[k'-1] = \UpdateLCP(\PBWT, \First, \LCP, k'-1)$%
		\tcp*{$=\LCP_T[k_T-1]$}
		$\Insert_x(\LCP,k')$\;
		\textbf{return} $n + 1, \PBWT, \First, \SFR, \SFL, \RM, \SFC, \LCP$\;
	}
\end{algorithm2e}

\Cref{alg:updateall} shows our update procedure, which maintains the array $\First$ and other auxiliary data structures in addition to $\PBWT$.
After getting the key position $\Dol$ as the unique occurrence position of $\$$ in $\PBWT_S$ at \Cref{ln:dol},
to update the values of $\PBWT$ and $\First$ from $\PBWT_S$ and $\First_S$ to $\PBWT_T$ and $\First_T$, respectively,
we compute intermediate arrays $\PBWT^\circ_T$ and $\First_T^\circ$ of length $n$, which satisfy
\begin{align*}
	\PBWT^{\circ}_{T}[i] &= \llrrbracket{T_{\RA_S[i]}}[n+1]
\quad \text{ and } \quad
\First^{\circ}_{T}[i] = \llrrbracket{T_{\RA_S[i]}}[1]
\end{align*}
for $1 \le i \le n$ using $\UpdateLF$ at \Cref{ln:updateLF}.
In other words, $\PBWT^{\circ}_{T}$ and $\First^{\circ}_{T}$ are extracted from the last and the first columns of the $n\times(n+1)$ matrix $(\llrrbracket{T_{\RA_S[i]}})_{i=1}^n$, respectively, which can conceptionally be obtained by deleting the $\Pos$-th row of the rotation sort matrix of $T$.
We then find the other key position $\Pos$ and inserts appropriate values into $\PBWT^{\circ}_{T}$ and $\First^{\circ}_{T}$ at $\Pos$ to turn them into $\PBWT_{T}$ and $\First_{T}$, respectively, by $\InsertDol$ at \Cref{ln:insertion}.
The rest of the algorithm is devoted to maintaining some of the helper arrays. 
Particularly, a dedicated function $\UpdateLCP$ is used to update the $\inft$-LCP array.
In the remainder of this section, we will explain those functions and involved auxiliary data structures in respective subsections.
\Cref{table:update} shows an example of our 2-step update.

\begin{table}[t]
	\caption{An example of our update step for $S=\mathtt{xayzzazyza\$}$ and $T=\mathtt{y}S$. 
	The updated and inserted values are highlighted. 
	In the arrays $\lrangle{T_{\RA_S[i]}}$, updated/inserted values appear only after $\$$.
	\Cref{cor:RAT,lem:equallcp} are immediate consequences of this observation.
	}
	\label{table:update}
		\centering
		\scalebox{\scl}{
			\begin{tabular}{|c|c|c|}
				\hline
				$\First_S[i]$ & $\lrangle{S_{\RA_S[i]}}$ & $\PBWT_S[i]$ \\
				\hline
				$\bdl$   & $\bdl\ift\tta\ift\ift1\tta252\tta$ & $\tta$ \\
				$\tta$ & $\tta\bdl\ift\tta\ift\ift1\tta252$ & $3$    \\
				$\tta$ & $\tta\ift\ift1\tta252\tta\bdl\ift$ & $3$    \\
				$\tta$ & $\tta\ift\ift2\tta\bdl\ift\tta661$ & $1$    \\
				$3$    & $\ift\tta\bdl\ift\tta\ift61\tta25$ & $3$    \\
				$1$    & $\ift\tta2\ift2\tta\bdl\ift\tta66$ & $1$    \\
				$3$    & $\ift\tta\ift\ift1\tta252\tta\bdl$ & $\bdl$   \\
				$1$    & $\ift1\tta2\ift2\tta\bdl\ift\tta6$ & $2$    \\
				$3$    & $\ift\ift\tta\bdl\ift\tta661\tta2$ & $2$    \\ \hdashline[2pt/1pt]
				$2$    & $\ift\ift1\tta252\tta\bdl\ift\tta$ & $\tta$ \\
				$2$    & $\ift\ift2\tta\bdl\ift\tta661\tta$ & $\tta$ \\
				\hline 
				\multicolumn{3}{c}{\phantom{A}}
			\end{tabular}
		}
	\quad
		\scalebox{\scl}{
			\begin{tabular}{|c|c|c|}
				\hline
				$\First^{\circ}_{T}[i]$ & $\lrangle{{T}_{\RA_{S}[i]}}$ & $\PBWT^{\circ}_{T}[i]$ \\
				\hline
				$\bdl$   & $\bdl\rd{\ift}\ift\tta\rd{3}\ift1\tta252\tta$ & $\tta$ \\
				$\tta$ & $\tta\bdl\rd{\ift}\ift\tta\rd{3}\ift1\tta252$ & $3$    \\
				$\tta$ & $\tta\ift\ift1\tta252\tta\bdl\rd{4}\ift$ & $3$    \\
				$\tta$ & $\tta\ift\ift2\tta\bdl\rd{4}\ift\tta\rd{3}61$ & $1$    \\
				$3$    & $\ift\tta\bdl\rd{\ift}\ift\tta\rd{3}61\tta25$ & $\rd{2}$    \\
				$1$    & $\ift\tta2\ift2\tta\bdl\rd{4}\ift\tta\rd{3}6$ & $1$    \\
				$3$    & $\ift\tta\ift\ift1\tta252\tta\bdl\rd{4}$ & $\rd{2}$   \\
				$1$    & $\ift1\tta2\ift2\tta\bdl\rd{4}\ift\tta\rd{3}$ & $2$    \\
				$\rd{2}$    & $\ift\ift\tta\bdl\rd{4}\ift\tta\rd{3}61\tta2$ & $2$    \\   \hdashline[2pt/1pt]
				$2$    & $\ift\ift1\tta252\tta\bdl\rd{4}\ift\tta$ & $\tta$ \\
				$2$    & $\ift\ift2\tta\bdl\rd{4}\ift\tta\rd{3}61\tta$ & $\tta$ \\
				\hline 
				\multicolumn{3}{c}{\phantom{A}}
			\end{tabular}
		}
	\quad
		\scalebox{\scl}{
			\begin{tabular}{|c|c|c|}
				\hline
				$\First_{T}[i]$ & $\lrangle{{T}_{\RA_{T}[i]}}$ & $\PBWT_{T}[i]$ \\
				\hline
				$\bdl$   & $\bdl\ift\ift\tta3\ift1\tta252\tta$ & $\tta$ \\
				$\tta$ & $\tta\bdl\ift\ift\tta3\ift1\tta252$ & $3$    \\
				$\tta$ & $\tta\ift\ift1\tta252\tta\bdl4\ift$ & $3$    \\
				$\tta$ & $\tta\ift\ift2\tta\bdl4\ift\tta361$ & $1$    \\
				$3$    & $\ift\tta\bdl\ift\ift\tta361\tta25$ & $2$    \\
				$1$    & $\ift\tta2\ift2\tta\bdl4\ift\tta36$ & $1$    \\
				$3$    & $\ift\tta\ift\ift1\tta252\tta\bdl4$ & $2$   \\
				$1$    & $\ift1\tta2\ift2\tta\bdl4\ift\tta3$ & $2$    \\
				$2$    & $\ift\ift\tta\bdl4\ift\tta361\tta2$ & $2$    \\ \hdashline[2pt/1pt]
				$\rd{2}$    & $\rd{\ift\ift\tta3\ift1\tta252\tta\bdl}$ & $\rd{\bdl}$    \\ \hdashline[2pt/1pt]
				$2$    & $\ift\ift1\tta252\tta\bdl4\ift\tta$ & $\tta$ \\ 
				$2$    & $\ift\ift2\tta\bdl4\ift\tta361\tta$ & $\tta$ \\
				\hline
			\end{tabular}
		}
\end{table}

\subsection{Step 1: $\UpdateLF$ computes $\PBWT^{\circ}_{T}[i]$ and $\First^{\circ}_{T}[i]$}

When $c \in \Sigma$, computing $\PBWT_T^\circ$ and $\First_T^\circ$ from $\PBWT_S$ and $\First_S$, respectively, is easy.
\begin{restatable}{lemma}{lemstatic}
	\label{lem:static}
	If $c \in \Sigma$, then for any $i \in \{1,\dots,n\}$,
	$\First^{\circ}_{T}[i] = \First_S[i]$ and $ \PBWT^{\circ}_{T}[i] = \PBWT_S[i]$ except for $\PBWT^{\circ}_{T}[\Dol] = c$.
\end{restatable}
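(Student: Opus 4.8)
The plan is to reduce the claim to a single structural fact relating a right rotation of $T=cS$ to the corresponding rotation of $S$, together with a ``neutrality'' property of static characters in the encoding $\llrrbracket{\cdot}$. Fix $i \in \{1,\dots,n\}$ and write $p = \RA_S[i]$, so that $1 \le p \le n$ and, by definition, $\First^{\circ}_{T}[i]=\llrrbracket{T_p}[1]$, $\PBWT^{\circ}_{T}[i]=\llrrbracket{T_p}[n+1]$, $\First_S[i]=\llrrbracket{S_p}[1]$, and $\PBWT_S[i]=\llrrbracket{S_p}[n]$. First I would unfold the definition of $\Rot$ to establish the identity
\begin{align*}
	T_p = S_p[1:p] \cdot c \cdot S_p[p+1:n],
\end{align*}
i.e.\ $T_p$ is obtained from $S_p$ by inserting the single character $c$ right after its first $p$ characters: indeed $\Rot(cS,p)$ moves the suffix $S[n-p+1:n]$ in front of $c$ exactly as $\Rot(S,p)$ does, while leaving $c$ between that suffix and the prefix $S[1:n-p]$. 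In particular $T_p[1]=S_p[1]$; moreover $T_p[n+1]=S_p[n]$ when $p<n$, whereas for $p=n$ (which happens exactly when $i=\Dol$, since $\RA_S[\Dol]=n$) we get $T_p=Sc$ and $T_p[n+1]=c$.

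The key lemma I would isolate is that inserting a static character is ``invisible'' to $\llrrbracket{\cdot}$: if $V = U[1:k]\, c\, U[k+1:|U|]$ with $c\in\Sigma$, then $\llrrbracket{V}[k+1]=c$, while $\llrrbracket{V}$ agrees with $\llrrbracket{U}$ at every other position (positions $\le k$ unchanged, positions $>k+1$ shifted down by one to match). For a static position this is immediate from the first case of the definition of $\llrrbracket{\cdot}$. For a parameter position the value equals $|\rest{\Pi}{W}|$, the number of distinct parameter characters in the cyclic window $W$ running from just after that position up to the next occurrence of the same character. Inserting $c$ creates no new occurrence of any parameter character and preserves the relative order of the existing ones, so the ``next occurrence'' delimiting $W$ is unchanged; hence the window in $V$ differs from the corresponding window in $U$ by at most the single static symbol $c$, which, since $c\notin\Pi$, is not counted. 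The two counts therefore coincide.

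Applying this lemma with $U=S_p$, $V=T_p$, $k=p$ then finishes the argument. Position $1$ is always uninserted (as $p\ge 1$), so $\First^{\circ}_{T}[i]=\llrrbracket{T_p}[1]=\llrrbracket{S_p}[1]=\First_S[i]$ for every $i$, including $i=\Dol$. For the $\PBWT$ values, when $i\ne\Dol$ we have $p<n$, so position $n+1$ of $T_p$ is the shifted image of position $n$ of $S_p$, giving $\PBWT^{\circ}_{T}[i]=\llrrbracket{T_p}[n+1]=\llrrbracket{S_p}[n]=\PBWT_S[i]$; and when $i=\Dol$ we have $p=n$, $T_p=Sc$, and $\llrrbracket{T_p}[n+1]=c$ because $c\in\Sigma$, which is precisely the stated exception $\PBWT^{\circ}_{T}[\Dol]=c$.

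The routine parts are the two index checks (that positions $1$ and $n+1$ land where claimed) and the case split at $p=n$. I expect the only genuine obstacle to be a fully rigorous statement and proof of the static-insertion lemma, in particular the cyclic wrap-around case where a parameter character occurs only once in the rotation: there the window $W$ sweeps the entire string, and one must verify that this remains so after the insertion and that the windows in $U$ and $V$ still differ only by the neutral symbol $c$. Once that invariance is established, the lemma follows and the remainder is bookkeeping.
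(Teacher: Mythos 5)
Your proof is correct and rests on the same core observation as the paper's: a static character contributes nothing to the counts of distinct parameter characters that define $\llrrbracket{\cdot}$, so inserting $c\in\Sigma$ leaves every other encoded value unchanged while the inserted position itself encodes to $c$. The paper packages this once as $\llrrbracket{S}=\llrrbracket{T}[2:{}]$ and then invokes $\llrrbracket{\Rot(T,i)}=\Rot(\llrrbracket{T},i)$ to read off positions $1$ and $n+1$ of each row, whereas you re-derive the insertion identity inside each rotation via $T_p=S_p[1:p]\,c\,S_p[p+1:n]$; the difference is purely organizational, and your explicit treatment of the cyclic window (including the wrap-around case) supplies exactly the detail the paper compresses into ``by definition.''
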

Concerning the case $c \in \Pi$, first let us express the values of $\llrrbracket{T}$ using $\llrrbracket{S}$.
\begin{restatable}{lemma}{lemparameterized}
\label{lem:parameterized}
	Suppose $c \in \Pi$.
	\[
		\llrrbracket{T}[1] = \begin{cases}
			|\rest{\Pi}{S}|+1	& \text{if $\CL_S(c) = 0$,}
	\\		|\rest{\Pi}{S[1:\CL_S(c)]}|	 & \text{otherwise.}
		\end{cases}
	\]
	For $1 \le p \le n$, if $S[p] \in \Sigma$ or $p \ne \CR_S(S[p])$, then $\llrrbracket{T}[p+1] = \llrrbracket{S}[p]$.
	If $S[p] = a \in \Pi$ and $p = \CR_S(a)$, then
	\begin{align*}
		\llrrbracket{T}[p+1] = 
		\begin{cases}
			|\rest{\Pi}{S[p+1:n]}|+1 \quad \text{if } a = c, \\
			\llrrbracket{S}[p]+1 \quad \text{if }  \CL_S(c) = 0 \text{ or } \\
			\text{\phantom{$\llrrbracket{S}[p]+1$\quad if }} \CL_S(a) < \CL_S(c) \le \CR_S(c) < \CR_S(a), \\
			\llrrbracket{S}[p] \text{\phantom{${}+1$\quad}} \text{otherwise}.
		\end{cases}
	\end{align*}
\end{restatable}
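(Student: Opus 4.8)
The plan is to unfold the definition of $\llrrbracket{\cdot}$ through rotations and to exploit one structural observation: passing from $S$ to $T=cS$ merely splices the single character $c$ into the cyclic sequence between $S[n]$ and $S[1]$. Recall that for a parameter position $i$ the value $\llrrbracket{T}[i]$ counts the distinct parameter characters in the prefix of the rotation $T[i+1:n+1]\,T[1:i]$ that ends at the leftmost (cyclically next) occurrence of $T[i]$. Using $T[1]=c$ and $T[p+1]=S[p]$, the rotation governing $\llrrbracket{T}[p+1]$ is $S[p+1:n]\,c\,S[1:p]$, which is precisely the rotation $S[p+1:n]\,S[1:p]$ governing $\llrrbracket{S}[p]$ with one extra $c$ inserted at the wraparound. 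The entire argument then reduces to asking whether the counted prefix reaches that wraparound and, if so, whether the inserted $c$ is a newly seen parameter character.

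For the first entry I would argue directly from the rotation $T[2:n+1]\,T[1]=S\,c$, whose trailing character is the target $c$. If $c$ does not occur in $S$ (so $\CL_S(c)=0$), its leftmost occurrence is this trailing one, the counted prefix is all of $S\,c$, and the value is $|\rest{\Pi}{S}|+1$; otherwise the leftmost occurrence is at $\CL_S(c)$, the counted prefix is $S[1:\CL_S(c)]$, and the value is $|\rest{\Pi}{S[1:\CL_S(c)]}|$. This is exactly the claimed expression for $\llrrbracket{T}[1]$.

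Next I would dispose of the easy positions $1\le p\le n$. If $S[p]\in\Sigma$, then $T[p+1]=S[p]$ is static and both encodings return that static symbol. If $S[p]=a\in\Pi$ with $p\ne\CR_S(a)$, the cyclically next occurrence of $a$ already lies within $S[p+1:n]$, say at $j$; the counted prefix is then $S[p+1:j]$ for both $\llrrbracket{S}[p]$ and $\llrrbracket{T}[p+1]$ and never reaches the wraparound, so the inserted $c$ is irrelevant and the two values agree.

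The main work, and the step I expect to be the obstacle, is the wraparound case $S[p]=a\in\Pi$ with $p=\CR_S(a)$. Here the leftmost occurrence of $a$ lies in the $S[1:p]$ block, at $\CL_S(a)$, so the counted prefix does cross the spliced $c$. If $a=c$, this occurrence is the spliced $c$ itself, the counted prefix is $S[p+1:n]\,c$, and since $a=c\notin S[p+1:n]$ the value is $|\rest{\Pi}{S[p+1:n]}|+1$, matching the first branch. If $a\ne c$, the counted prefix is $S[p+1:n]\,c\,S[1:\CL_S(a)]$, to be compared with $S[p+1:n]\,S[1:\CL_S(a)]$ for $\llrrbracket{S}[p]$; the two differ only by the inserted $c$, so $\llrrbracket{T}[p+1]=\llrrbracket{S}[p]+1$ exactly when $c$ occurs nowhere in $S[p+1:n]\,S[1:\CL_S(a)]$, and $\llrrbracket{T}[p+1]=\llrrbracket{S}[p]$ otherwise. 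The delicate remaining step is to recast this absence condition as the stated inequalities: $c\notin S[p+1:n]$ amounts to $\CR_S(c)<p=\CR_S(a)$ (using $S[p]=a\ne c$), and $c\notin S[1:\CL_S(a)]$ amounts to $\CL_S(a)<\CL_S(c)$; since $\CL_S(c)\le\CR_S(c)$ whenever $c$ occurs, these combine, together with the degenerate case $\CL_S(c)=0$, to exactly the condition ``$\CL_S(c)=0$ or $\CL_S(a)<\CL_S(c)\le\CR_S(c)<\CR_S(a)$''. The care here is to treat the convention $\CL_S(c)=\CR_S(c)=0$ for an absent $c$ and the boundary positions $\CL_S(a)$ and $\CR_S(a)$ consistently.
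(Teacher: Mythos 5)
Your proposal is correct and follows essentially the same route as the paper's proof: unfold the definition via the rotations $S[p+1:n]\,c\,S[1:p]$ versus $S[p+1:n]\,S[1:p]$, note that only the wraparound positions $p=\CR_S(a)$ (and position $1$) can see the spliced-in $c$, and translate ``$c$ absent from the counted prefix'' into the stated inequalities on $\CL_S$ and $\CR_S$. The case analysis, including the degenerate $\CL_S(c)=0$ convention, matches the paper's argument.
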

\begin{algorithm2e}[t!]
	\caption{Computing $\PBWT^{\circ}_{T}$ and $\First^{\circ}_{T}$}
	\label{alg:update1}
	\SetVlineSkip{0.5mm}
	\Fn{$\UpdateLF(c, n, \PBWT, \First, \SFR, \SFL, \RM, k)$}{
		\lIf{$c \in \Sigma$}{%
			$\PBWT[k] = c$}
		\Else{
			\ForEach{$a \in \Pi$ with $\SFL[a] \ne 0$}{ \label{line1:for1}
				\tcp{Computing $\PBWT^{\circ}_{T}[\RM[a]]=\First^{\circ}_{T}[\LF_S(\RM[a])]$}
					$i = \RM[a]$\;
					$j = \Select_{\PBWT[i]}(\First, \Rank_{\PBWT[i]}(\PBWT, i))$\tcp*{$j=\LF_S(i)$}
					\uIf{$a = c$}{
						$\Cnt=0$\;
						\ForEach{$b \in \Pi$ with $\SFL[b] \ne 0$}{\label{line1:for2}
							\lIf{$\SFL[a] \ge \SFR[b]$}{%
								$\Cnt = \Cnt+1$%
							}
						}
					}
					\Else{
						$\Cnt = \PBWT[i]$\;
						\If{$\SFL[c] = 0$  or  $\SFL[a] > \SFL[c] \ge \SFR[c] > \SFR[a]$}{
							$\Cnt = \Cnt+1$\;
						}
					}
					$\PBWT[i] = \Cnt$;
					$\First[j] = \Cnt$\;
			}
			\tcp{Computing $\PBWT^{\circ}_{T}[\Dol]=\llrrbracket{T}[1]$}
			$\Cnt=1$\;
			\uIf{$\SFL[c] = 0$}{
				\lForEach{$a \in \Pi$ with $\SFL[a] \ne 0$}{%
						$\Cnt = \Cnt+1$}
				$\SFR[c] = n+1$;
				$\SFL[c] = n+1$;
				$\RM[c] = k$\;\label{ln:RM}
			}\Else{
				\lForEach{$a \in \Pi$ with $\SFL[a] > \SFL[c]$}{%
						$\Cnt = \Cnt+1$}
				$\SFL[c] = n+1$\;
			}
			$\PBWT[k] = \Cnt$\;
		}
	\textbf{return} $\PBWT, \First, \SFR, \SFL, \RM$\;
	}
\end{algorithm2e}

Based on \Cref{lem:static,lem:parameterized}, \Cref{alg:update1} computes $\First^{\circ}_{T}[i]$ and $\PBWT^{\circ}_{T}[i]$ from $\First_{S}[i]$ and $\PBWT_{S}[i]$, as well as other auxiliary data structures.
Note that, since the intermediate matrix $(T_{\RA_S[i]})_{i=1}^n$ misses a row corresponding to $\llrrbracket{T}$, the value $\llrrbracket{T}[1]$ does not matter for $\First^\circ_T$, whereas it appears as $\PBWT_T^\circ[\Dol]=\PBWT^\circ_T[\RA_S^{-1}[n]]$.
When $c \in \Pi$, \Cref{lem:parameterized} implies that, other than $\PBWT^\circ_T[\Dol]=\llrrbracket{T}[1]$, we only need to update the values at the positions in $\PBWT$ and $\First$ corresponding to the rightmost occurrence position $p = \CR_S(a)$ of each parameter character $a \in \Pi$ in $S$.
By rotating $S$ to the right by $n-p$, that occurrence comes to the right end and appears in the pBWT.
That is, the array $\PBWT$ needs to be updated only at $i$ such that $\RA_S[i] = n - \CR_S(a)$.
The algorithm maintains such position $i$ as $\RM_S[a]$ for each $a \in \rest{\Pi}{S}$, i.e. $\RM_S[a] = \RA^{-1}_S[n - \CR_S(a)]$.
Similarly, we only need to update $\First$ at $\LF_S(\RM_S[a])$, where $\First_T^\circ[\LF_S(\RM_S[a])]=\PBWT_T^\circ[\RM_S[a]]$.
In our algorithm, as alternatives of $\CL_S$ and $\CR_S$, we maintain two arrays $\SFL$ and $\SFR$ that store the leftmost and rightmost occurrence positions of parameter characters counting \emph{from the right end}, respectively, i.e., $\SFL_S[a] = \CR_{\overline{S}}(a)$ and $\SFR_S[a] = \CL_{\overline{S}}(a)$ for each $a \in \Pi$, where $\overline{S}$ is the reverse of $S$.

\Cref{alg:update1} also updates $\RM$ to $\RM_T^\circ$, which indicates the row of $\PBWT_T^\circ$ corresponding to the rightmost occurrence of each parameter character in $T$.
That is, $\RM_T^\circ[a] = i$ iff $\RA_S[i] = \SFR_T[a]$, as long as $a$ occurs in $T$.
When $c \in \Pi$ and it appears in the text for the first time, we have $\RM_T^\circ[c] = \Dol$ (\Cref{ln:RM}).
Other than that, $\RM_T^\circ[a]=\RM_S[a]$ for every $a \in \Pi$.

\begin{restatable}{lemma}{lemupdateone}
\label{lem:update1}
	\Cref{alg:update1} computes $\PBWT^{\circ}_{T}[i]$, $\First^{\circ}_{T}[i]$, $\SFR_{T}$, $\SFL_{T}$, and\/ $\RM^\circ_T$ in \linebreak $O(|\Pi| \frac{\log n}{\log \log n})$ amortized time.
\end{restatable}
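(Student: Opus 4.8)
The plan is to establish correctness and the running-time bound separately, splitting the correctness argument along the two branches $c \in \Sigma$ and $c \in \Pi$ of \Cref{alg:update1}. Throughout, the bridge between the positional quantities of \Cref{lem:static,lem:parameterized} and the data actually stored is the dictionary $\CL_S(a) = n+1-\SFL[a]$ and $\CR_S(a) = n+1-\SFR[a]$, which is immediate from $\SFL_S[a]=\CR_{\rev{S}}(a)$ and $\SFR_S[a]=\CL_{\rev{S}}(a)$. The static case is then handled in one line: by \Cref{lem:static} we must leave $\First$ and $\PBWT$ untouched except for $\PBWT[\Dol]=c$, and since prepending a static character does not move any parameter occurrence in $\rev{S}$ (this is exactly why the from-right encoding is used), the arrays $\SFL,\SFR,\RM$ are already correct; the branch $\PBWT[k]=c$ does precisely this.

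For $c \in \Pi$ I would first argue that the only last-column entries of the intermediate matrix differing from $\PBWT_S$ are those sitting at a rightmost occurrence of some parameter character, together with $\PBWT^{\circ}_{T}[\Dol]=\llrrbracket{T}[1]$; this is the content of \Cref{lem:parameterized}, whose middle clause ($S[p]\in\Sigma$ or $p\neq\CR_S(S[p])$) pins down the unchanged entries. Next I would check that $\RM$ points exactly at these rows, i.e. $\RA_S[\RM[a]]=n-\CR_S(a)$, so that iterating over $a\in\Pi$ with $\SFL[a]\neq 0$ visits each changing row once. For each row $i=\RM[a]$ I would verify that the written value matches the corresponding clause of \Cref{lem:parameterized} after applying the dictionary: the guard $\SFL[c]=0$ or $\SFL[a]>\SFL[c]\ge\SFR[c]>\SFR[a]$ is the verbatim translation of $\CL_S(c)=0$ or $\CL_S(a)<\CL_S(c)\le\CR_S(c)<\CR_S(a)$, so the $+1$ is added exactly when prescribed. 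The $a=c$ subcase and the post-loop computation of $\llrrbracket{T}[1]$ instead require counting the distinct parameter characters of a one-sided window ($S[\CR_S(c)+1:n]$ and $S[1:\CL_S(c)]$, respectively); here I would show that ``$b$ occurs in that window'' is equivalent to a single comparison of $\SFL/\SFR$ entries, so each inner loop returns the desired cardinality plus the additive $+1$ contributed by $c$ itself. Finally I would propagate each last-column update to the first column through $\First^{\circ}_{T}[\LF_S(i)]=\PBWT^{\circ}_{T}[i]$, using the rank/select formula for $\LF_S$, and confirm the bookkeeping for $\SFL_T,\SFR_T,\RM^{\circ}_T$: in $\rev{T}=\rev{S}c$ only $c$'s leftmost-from-right position moves, to $n+1$, matching the assignments guarded by $\SFL[c]=0$, while a first occurrence of $c$ additionally sets $\RM[c]=\Dol$.

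The main obstacle, I expect, is twofold and both halves concern the $c\in\Pi$ branch. First, the from-right counting in the $a=c$ and $\llrrbracket{T}[1]$ subcases must be matched against \Cref{lem:parameterized} with care, since a careless inequality would also count characters whose rightmost occurrence lies inside $[\CL_S(c),\CR_S(c)]$ and therefore do \emph{not} appear in the window, causing over-counting; isolating the exact comparison, and checking that $c$ supplies precisely the single $+1$, is the delicate step. Second, because $\PBWT$ and $\First$ are edited in place while the very same arrays are queried to evaluate $\LF_S(\RM[a])=\Select_{\PBWT[i]}(\First,\Rank_{\PBWT[i]}(\PBWT,i))$, I would have to argue that these queries still return the correct $\LF_S$ values despite writes performed earlier in the loop, e.g.\ by verifying that the reads fixing each image $j$ are unaffected by the prior writes, or equivalently by reorganizing the computation into a batch that fixes all $\LF_S$-images from the unmodified $\PBWT_S,\First_S$ before overwriting any entry; this is legitimate because the edited positions and their recorded values are pairwise linked and the edits commute.

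The running-time bound is then routine. The outer loop runs $O(|\Pi|)$ times; each ordinary iteration performs a constant number of $\Rank$, $\Select$, $\Access$, and in-place update operations on $\PBWT$ and $\First$, each costing amortized $O(\tfrac{\log n}{\log\log n})$ on the dynamic arrays of Navarro and Nekrich since the length is $O(n)$ (an in-place assignment being realized by one $\Delete$ followed by one $\Insert$). The two places that loop over $\Pi$ internally, the $a=c$ subcase and the computation of $\llrrbracket{T}[1]$, are each entered only once and touch the plain arrays $\SFL,\SFR$ in $O(1)$ per step, contributing $O(|\Pi|)$ in total. Summing, $O(|\Pi|)\cdot O(\tfrac{\log n}{\log\log n}) + O(|\Pi|) = O(|\Pi|\tfrac{\log n}{\log\log n})$ amortized, as claimed.
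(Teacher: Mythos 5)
Your overall decomposition---translating $\CL_S,\CR_S$ into $\SFL,\SFR$ via $\CL_S(a)=n+1-\SFL[a]$ and $\CR_S(a)=n+1-\SFR[a]$, dispatching the static case by \Cref{lem:static}, checking each branch of \Cref{alg:update1} against the corresponding clause of \Cref{lem:parameterized} at the rows indexed by $\RM$, propagating to $\First$ via $\LF_S$, and then counting dynamic-array operations---is exactly the justification the paper gives in the prose of Section~3.1 (the appendix contains no separate proof of this lemma). The time analysis and your treatment of the guard ``$\SFL[c]=0$ or $\SFL[a]>\SFL[c]\ge\SFR[c]>\SFR[a]$'' and of the post-loop computation of $\llrrbracket{T}[1]$ are correct.

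However, the two points you flag as ``obstacles'' are left unresolved, and they are where the proof actually lives. For the $a=c$ branch you promise a single $\SFL/\SFR$ comparison equivalent to ``$b=c$ or $b$ occurs in $S[\CR_S(c)+1:n]$''; that comparison is $\SFR[a]\ge\SFR[b]$, i.e.\ $\CR_S(b)\ge\CR_S(c)$, whereas the test printed in \Cref{alg:update1} is $\SFL[a]\ge\SFR[b]$, i.e.\ $\CR_S(b)\ge\CL_S(c)$. These differ precisely on the characters you worry about, namely those whose rightmost occurrence lies in $[\CL_S(c),\CR_S(c))$: for $S=\mathtt{yxy\$}$ and $c=\mathtt{y}$ the printed test yields $\Cnt=2$ while $\llrrbracket{T}[\CR_S(c)+1]=1$. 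So the verification you defer does not go through for the algorithm as written; a complete proof must either correct the comparison or show the discrepancy never occurs (it does occur). Second, the interference between the in-place writes to $\PBWT,\First$ and the later rank/select evaluation of $\LF_S(\RM[a])$ is a genuine issue; your remedy of precomputing all images $j=\LF_S(\RM[a])$ from the untouched $\PBWT_S,\First_S$ before any write is sound, but the alternative you also offer---that the edits ``commute'' with the remaining queries---is not obvious: changing a linked pair from $v$ to $v'$ inserts a fresh occurrence of $v'$ into both arrays, and the order-consistency of that insertion with the surviving pairs of value $v'$ is itself a \Cref{lem:cross}-type statement that would still need proof. Until these two steps are carried out, what you have is a plan rather than a proof.
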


\subsection{Step 2: $\InsertDol$ computes $\PBWT_T$ and $\First_T$}

To transform $\First^\circ_T$ and $\PBWT^\circ_T$ into $\First_T$ and $\PBWT_T$, we insert the values $\llrrbracket{T}[1]$ and $\llrrbracket{T}[n+1]$ at the position $\Pos$, respectively.
We know those values as $\llrrbracket{T}[1]=\PBWT_T^\circ[\Dol]$ and $\llrrbracket{T}[n+1]=\$$.
Therefore, it is enough to discuss how to find the position $\Pos$.

In the case $c \in \Sigma$, the position $\Pos$ can be calculated similarly to the case of BWT for static strings thanks to \Cref{cor:Kim4}.
Define $\Sigma_{< b} = |\{\, a \in \Sigma \mid a < b \,\}|$.
\begin{restatable}{lemma}{lempostatic}
\label{lem:postatic}
If $c \in \Sigma$,
\(
	\Pos = |T|_{\Sigma_{< c}} + |\{\, i \mid \PBWT^{\circ}_{T}[i] = c,\, 1 \le i \le \Dol \,\}| 
\).
\end{restatable}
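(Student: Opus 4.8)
The plan is to compute $\Pos = \RA_T^{-1}[n+1]$ directly as the lexicographic rank of $\lrangle{T} = \lrangle{T_{n+1}}$ among the $n+1$ prev-encodings $\{\lrangle{T_p}\}_{p=1}^{n+1}$, and then to match this rank against the two summands on the right-hand side. Since $T[1]=c\in\Sigma$ we have $\lrangle{T}[1]=c$, and the first character of any rotation is either a static character $<c$, equal to $c$, or strictly larger than $c$ in the order $\$<a<i<\inft$. First I would record that the first characters of $T_1,\dots,T_n$ are exactly the characters of $S$ (each occurring once), while $T_{n+1}=T$ contributes the leading $c$; consequently the number of rotations whose first character is a static character smaller than $c$ is precisely $|T|_{\Sigma_{<c}}$, and every such rotation precedes $\lrangle{T}$, whereas every rotation whose first character exceeds $c$ (including those starting with a number or $\inft$) follows it. Hence $\Pos = |T|_{\Sigma_{<c}} + r$, where $r$ is the rank of $\lrangle{T}$ inside the block of rotations beginning with $c$; equivalently $r = 1 + |\{p : 1\le p\le n,\ T_p[1]=c,\ \lrangle{T_p}<\lrangle{T}\}|$, noting that $T_1[1]=S[n]=\$\ne c$ so this block sits inside $T_2,\dots,T_{n+1}$.

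Next I would reinterpret the target count $|\{i : 1\le i\le \Dol,\ \PBWT^\circ_T[i]=c\}|$ in terms of rotations \emph{ending} with $c$. The rows of $\PBWT^\circ_T$ are indexed so that $i\mapsto\RA_S[i]$ lists $T_1,\dots,T_n$ in increasing $\lrangle{\cdot}$-order (this is the content of \Cref{cor:RAT} together with the row-deletion description of the intermediate matrix), and because $c\in\Sigma$ we have $\PBWT^\circ_T[i]=\llrrbracket{T_{\RA_S[i]}}[n+1]=c$ exactly when $T_{\RA_S[i]}$ ends with $c$. Thus the counted rows are precisely the $c$-ending rotations listed in sorted order. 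Since $T_n = Sc$ ends with the inserted $c$ and sits at row $\Dol=\RA_S^{-1}[n]$, with $\PBWT^\circ_T[\Dol]=c$ by \Cref{lem:static}, this count equals $1+|\{q : 1\le q\le n,\ T_q \text{ ends with } c,\ \lrangle{T_q}<\lrangle{T_n}\}|$.

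It then remains to prove the order-preserving correspondence tying the two blocks together: the map $q\mapsto q+1$ is a bijection from $c$-ending rotations among $T_1,\dots,T_n$ to $c$-starting rotations among $T_2,\dots,T_{n+1}$ (valid because a right rotation by one moves the last character to the front, so $T_{q+1}=c\cdot T_q[1:n]$, and the wrap-around is excluded since $T_1$ begins with $\$$), and under it $\lrangle{T_q}<\lrangle{T_n} \iff \lrangle{T_{q+1}}<\lrangle{T_{n+1}}$. I expect this equivalence to be the main obstacle. I would derive it from two elementary facts about the prev-encoding: it is \emph{causal}, so $\lrangle{W[1:m]}=\lrangle{W}[1:m]$; and a leading static character contributes only itself, so $\lrangle{cU}=c\,\lrangle{U}$. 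Applying these, the comparison of $\lrangle{T_{q+1}}$ and $\lrangle{T_{n+1}}$ (both starting with $c$) reduces to comparing $\lrangle{T_q[1:n]}=\lrangle{T_q}[1:n]$ with $\lrangle{T_n[1:n]}=\lrangle{T_n}[1:n]$; on the other side, $T_q$ and $T_n$ both end with the static character $c$, so $\lrangle{T_q}[n+1]=\lrangle{T_n}[n+1]=c$ and their comparison is decided within the first $n$ positions, where uniqueness of rotations (from the single $\$$) guarantees the two length-$n$ encodings differ whenever $q\ne n$, so no tie is lost. This yields $r = |\{i : 1\le i\le\Dol,\ \PBWT^\circ_T[i]=c\}|$ and hence the claimed formula. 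Conceptually this equivalence is the static-character instance of the LF order-preservation of \Cref{cor:Kim4}, which could alternatively be invoked on the full rotation sort matrix of $T$ to shortcut the calculation.
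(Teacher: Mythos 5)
Your proposal is correct and follows the same decomposition as the paper's proof: split $\Pos$ into the number of rotations whose prev-encoding begins with a static character smaller than $c$ (which is $|T|_{\Sigma_{< c}}$) plus the rank of $\lrangle{T}$ within the block of rotations beginning with $c$, and then identify that rank with the number of $c$'s in $\PBWT^{\circ}_{T}[1:\Dol]$. The difference lies in how the second identification is justified. The paper dispatches it in one line by invoking \Cref{cor:Kim4} together with the bijectivity of $\LF_T$ (rows $j \le \Pos$ with $\First_T[j]=c$ correspond order-preservingly to rows $i \le \LF_T^{-1}(\Pos)$ with $\PBWT_T[i]=c$), and then passes from $\PBWT_T$ to $\PBWT^{\circ}_{T}$ by deleting the $\Pos$-th row, whose entry $\$$ is smaller than $c$ and hence not counted. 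You instead re-derive the needed order preservation from scratch, via the explicit bijection $q \mapsto q+1$ between $c$-ending and $c$-starting rotations and the two elementary facts $\lrangle{cU}=c\lrangle{U}$ and $\lrangle{W[1:m]}=\lrangle{W}[1:m]$, with the single occurrence of $\$$ ruling out ties and excluding the wrap-around --- which, as you observe yourself, amounts to proving the static-character instance of \Cref{cor:Kim4}. Your route is more self-contained (it does not lean on the imported Kim--Cho corollary) and makes explicit why static last characters behave exactly as in the classical BWT, at the cost of being considerably longer; the paper's route buys brevity by treating the order preservation as a black box. Both arguments are sound, and your handling of the boundary cases ($T_1$ starting with $\$$, $T_{n+1}$ ending with $\$$, the row $\Dol$ contributing the leading $1$) is accurate.
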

In the case $c \in \Pi$, we will use \Cref{lem:cross} for finding $\Pos$ in \Cref{lem:posparameter} below.
We first observe that one can use $\LCP_S$ to calculate $\Lcp(\lrangle{T_p}, \lrangle{T_q})$ for most cases.
\begin{restatable}{lemma}{lemequallcp}
\label{lem:equallcp}
	For  $1 \le p < q \le n$,
		$\Lcp(\lrangle{T_p}, \lrangle{T_q}) = \Lcp(\lrangle{S_p},\linebreak[1] \lrangle{S_q})$.
\end{restatable}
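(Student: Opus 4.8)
The plan is to exploit the unique static character $\$$: it forces both longest common prefixes to live entirely inside the first $p-1$ positions, a region into which the prepended character $c$ has not yet entered in any of the four rotations involved, so the comparison reduces to one between prefixes of $S$-rotations.

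First I would locate $\$$ in each rotation. Since $S[n]=\$$ and $T=cS$ satisfies $T[n+1]=\$$, for every $r$ the last $r$ characters of both $S$ and $T$ equal $S[n-r+1:n]$, so the unique $\$$ sits exactly at position $r$ of $S_r$ and of $T_r$. Because $\$$ is the smallest character and prev-encoding leaves static characters unchanged, at position $p$ we have $\lrangle{S_p}[p]=\$$ whereas $\lrangle{S_q}[p]\ne\$$, and similarly $\lrangle{T_p}[p]=\$$ whereas $\lrangle{T_q}[p]\ne\$$ (here I use $p<q$, so $\$$ occurs in $S_q$ and in $T_q$ only at position $q\ne p$). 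Hence both longest common prefixes have length at most $p-1$; write $W^S$ and $W^T$ for the two prefixes.

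Next I would show that the $T$-rotations and the $S$-rotations agree on their first $p$ positions. The last $p$ characters of $T$ are precisely the last $p$ characters of $S$, so $T_p[1:p]=S_p[1:p]=S[n-p+1:n]$, with the inserted $c$ surfacing only at position $p+1$ of $T_p$; and since $p\le q$, likewise $T_q[1:p]=S_q[1:p]$. As each value $\lrangle{W}[i]$ is determined solely by $W[1:i]$, agreement of the raw strings on a prefix transfers to their prev-encodings, giving $\lrangle{T_p}[1:p]=\lrangle{S_p}[1:p]$ and $\lrangle{T_q}[1:p]=\lrangle{S_q}[1:p]$.

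Combining the two facts closes the argument: for every $i\le p-1$ we have $\lrangle{T_p}[i]=\lrangle{S_p}[i]$ and $\lrangle{T_q}[i]=\lrangle{S_q}[i]$, so $\lrangle{T_p}[i]=\lrangle{T_q}[i]$ holds if and only if $\lrangle{S_p}[i]=\lrangle{S_q}[i]$; as both longest common prefixes are confined to these positions, they coincide as strings, $W^T=W^S$, and in particular contain the same number of $\inft$'s, which is exactly the claim $\Lcp(\lrangle{T_p},\lrangle{T_q})=\Lcp(\lrangle{S_p},\lrangle{S_q})$. The only delicate point is the locality of prev-encoding, namely that inserting $c$ at position $p+1$ of $T_p$ cannot retroactively alter the encoded values at positions $\le p$, even when $c\in\Pi$. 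This is immediate from the definition, since $\lrangle{W}[i]$ only measures the distance to the previous occurrence within $W[1:i]$, and $c$ lies strictly to the right of position $p$ in every rotation under consideration; thus I do not expect any genuine obstacle beyond carefully recording these index bookkeeping facts.
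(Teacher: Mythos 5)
Your proof is correct and takes essentially the same approach as the paper: the paper also writes $S_p = U\$V$, $S_q = XY\$Z$, observes $T_p = U\$cV$ and $T_q = XY\$cZ$, and concludes that both $\Lcp$ values are determined by the common length-$p$ prefixes $U\$$ and $X$ because $\$$ bounds the common prefix. You simply spell out the index bookkeeping and the prefix-locality of the prev-encoding that the paper leaves implicit.
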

\begin{lemma}\label{lem:posparameter}
	Suppose $c \in \Pi$.
	Let $\ell_i = \Lcp(\lrangle{S_{\RA_S[i]}},\lrangle{S_{\RA_S[\Dol]}})$ for $1 \le i \le n$.
	Then,
	\begin{align}
\notag	\Pos ={} & 1 + |T|_\Sigma
\\	& + |\{\, i \mid 1 \le \PBWT^{\circ}_{T}[i] \le \PBWT^{\circ}_{T}[\Dol],\  1 \le i < \Dol \,\}| \label{eq:pospara3}
\\	& + |\{\, i \mid \ell_i < \PBWT^{\circ}_{T}[\Dol] < \PBWT^{\circ}_{T}[i],\ 1 \le i < \Dol \,\}| \label{eq:pospara4} 
\\	& + |\{\, i \mid 1 \le \PBWT^{\circ}_{T}[i] \le \min\{ \PBWT^{\circ}_{T}[\Dol]-1, \ell_i \},\ \Dol < i \le n \,\}| \label{eq:pospara5}
	\,.
	\end{align}
\end{lemma}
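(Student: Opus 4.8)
The plan is to locate $\Pos$ through the LF mapping applied to the single row whose right rotation equals $T$ itself, and then to re-express the resulting rank count in terms of the intermediate data $\PBWT^{\circ}_{T}$ and the quantities $\ell_i$. Since $|T|=n+1$ we have $T_{n+1}=T$, while $T_n=\Rot(T,n)$ is the left rotation $Sc$; hence rotating $T_n$ once to the right yields $T_{n+1}=T$. Writing $r=\RA_T^{-1}[n]$ for the rank of $T_n$ in the rotation sort matrix of $T$, this means $\Pos=\LF_T(r)$. Moreover $\llrrbracket{\Rot(T,n)}[n+1]=\llrrbracket{T}[1]$, so the pBWT value of this source row is $v:=\PBWT_T[r]=\llrrbracket{T}[1]=\PBWT^{\circ}_{T}[\Dol]$, which is a positive integer because $c\in\Pi$. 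In particular $\First_T[\Pos]=\llrrbracket{T}[1]=v\in\mbb{N}_+$, so $\Pos$ lies in the parameter part of $\First_T$, below every row whose first-column entry is static; consequently $\Pos>|T|_\Sigma$.

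Using that $\LF_T$ is a bijection onto $\{1,\dots,n+1\}$, I would write $\Pos=\LF_T(r)=|\{\,j:\LF_T(j)\le \LF_T(r)\,\}|$ and split the indices $j$ by the character class of $\PBWT_T[j]$. For every $j$ with $\PBWT_T[j]\in\Sigma$ we have $\First_T[\LF_T(j)]=\PBWT_T[j]\in\Sigma$, so $\LF_T(j)$ is one of the first $|T|_\Sigma$ positions of $\First_T$ and hence $\LF_T(j)<\Pos$; as there are exactly $|T|_\Sigma$ such rows, they contribute the summand $|T|_\Sigma$. The row $j=r$ itself satisfies $\LF_T(r)\le\LF_T(r)$ and accounts for the leading $1$. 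It then remains to count the parameter rows $j\ne r$ (those with $\PBWT_T[j]\in\mbb{N}_+$) having $\LF_T(j)<\Pos$, and to show that this count equals the sum of \eqref{eq:pospara3}, \eqref{eq:pospara4}, and \eqref{eq:pospara5}.

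For this final count I would translate each such row back to its index $i$ in the intermediate matrix $(\llrrbracket{T_{\RA_S[i]}})_{i=1}^n$, using that the $S$-to-$T$ rank map (insertion of one row at $\Pos$) is strictly increasing; thus the row $r$ corresponds to $i=\Dol$, a parameter row $j$ corresponds to some $i\ne\Dol$ with $\PBWT_T[j]=\PBWT^{\circ}_{T}[i]$, and $j<r\iff i<\Dol$. Since $\RA_S[\Dol]=n$ is maximal, every such $i\ne\Dol$ has $\RA_S[i]<n$, so by \Cref{lem:equallcp} the relevant LCP value equals $\Lcp(\lrangle{S_{\RA_S[i]}},\lrangle{S_{\RA_S[\Dol]}})=\ell_i$. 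I would then invoke \Cref{lem:cross}, respecting its $i<j$ hypothesis. When $i<\Dol$ the row $j$ precedes $r$, so $\LF_T(j)<\Pos$ is equivalent to $\min\{\PBWT^{\circ}_{T}[i]-1,\ell_i\}<v$, i.e. $\PBWT^{\circ}_{T}[i]\le v$ or $\ell_i<v$, whose solution set is precisely \eqref{eq:pospara3} together with \eqref{eq:pospara4}. When $i>\Dol$ the roles reverse, so $\LF_T(j)<\Pos$ is the negation of $\min\{v-1,\ell_i\}<\PBWT^{\circ}_{T}[i]$, i.e. $\PBWT^{\circ}_{T}[i]\le\min\{v-1,\ell_i\}$, which is exactly \eqref{eq:pospara5}. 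Adding the two cases to the earlier $1+|T|_\Sigma$ gives the claim.

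I expect the main obstacle to be the directionality of \Cref{lem:cross}: its asymmetric $i<j$ hypothesis is exactly what forces the two genuinely different conditions for $i<\Dol$ and for $i>\Dol$, and the argument hinges on correctly matching ``which index is smaller in $T$-order'' with the sign of $i-\Dol$ via the order-preservation of the insertion map. A secondary point to check carefully is that the source row $T_n$ really ends in $c\in\Pi$ and that every parameter row compared against it also ends in a parameter character, so that \Cref{lem:cross} is applicable throughout.
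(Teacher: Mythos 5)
Your proof is correct and follows essentially the same route as the paper's: both arguments reduce the count to the $\LF_T$-preimages of the positions below $\Pos$, apply \Cref{lem:cross} with the row of $T_n$ (your $r$, the paper's $h=\LF_T^{-1}(\Pos)$) as the pivot to obtain the three mutually exclusive cases, transfer them to the intermediate matrix via the order-preserving row insertion at $\Pos$, and invoke \Cref{lem:equallcp} to identify the LCP values with $\ell_i$. The only cosmetic difference is that you compute $\Pos=\LF_T(r)=|\{\,j\mid \LF_T(j)\le\LF_T(r)\,\}|$ directly, whereas the paper starts from $\Pos=1+|T|_\Sigma+|\{\,j\mid \First_T[j]\in\mbb{N}_+,\,j<\Pos\,\}|$ and pulls back through $\LF_T^{-1}$.
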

\begin{proof}
By definition,
\[
	\Pos = 1 + |T|_\Sigma + |\{\, j \mid \First_{T}[j] \in \mbb{N}_+,\, 1 \le j < \Pos \,\}| 
\,,\]
of which we focus on the last term.
Let $h=\LF_T^{-1}(\Pos)$ and $m_i = \Lcp(\lrangle{T_{\RA_{T}[i]}},\linebreak[1]\lrangle{T_{\RA_{T}[h]}})$ for $1 \le i \le n+1$.
By \Cref{lem:cross}, $\First_{T}[j] \in \mbb{N}_+$ and $1 \le j < \Pos$ iff for $i = \LF_T^{-1}(j)$, either
\begin{enumerate}
	\item $1 \le i < h$ and $1 \le \PBWT_{T}[i] \le \PBWT_{T}[h]$,
	\item $1 \le i < h$ and $m_i < \PBWT_{T}[h] < \PBWT_{T}[i]$, or
	\item $h < i \le n+1$ and $1 \le \PBWT_{T}[i] \le \min\{ \PBWT_{T}[h]-1,\, m_i\} $.
\end{enumerate}
Those three cases are mutually exclusive.
Let $m^\circ_i = \Lcp(\lrangle{T_{\RA_S[i]}},\lrangle{T_{\RA_S[\Dol]}})$.
Counting each of the above cases is equivalent to counting $i$ such that
\begin{enumerate}
	\item $1 \le i < \Dol$ and $1 \le \PBWT^{\circ}_{T}[i] \le \PBWT^{\circ}_{T}[\Dol]$,
	\item $1 \le i < \Dol$ and $m^\circ_i < \PBWT^{\circ}_{T}[\Dol] < \PBWT^{\circ}_{T}[i]$, or
	\item $\Dol < i \le n$ and $1 \le \PBWT^{\circ}_{T}[i] \le \min\{ \PBWT^{\circ}_{T}[\Dol]-1,\, m^\circ_i\} $.
\end{enumerate}
This is because the matrix $(\llrrbracket{T_{\RA_S[i]}})_{i=1}^n$ can conceptionally be obtained by removing the $\Pos$-th row of the matrix of $(\llrrbracket{T_{\RA_T[i]}})_{i=1}^{n+1}$, where the row $\Dol$ of $(\llrrbracket{T_{\RA_S[i]}})_{i=1}^n$ corresponds to the row $h$ of $(\llrrbracket{T_{\RA_T[i]}})_{i=1}^{n+1}$ in particular ($\RA_S[\Dol]=\RA_T[h]=n$),
and $i=\Pos=\RA^{-1}_{T}[n+1]$ is not counted due to ${T}[n+1] = \$ \in \Sigma$.

\Cref{lem:equallcp} implies $m_i^\circ=\ell_i$, which completes the proof.
\qed\end{proof}

\begin{algorithm2e}[t]
	\caption{Inserting $\llrrbracket{T}[1]$ to $\First$ and $\llrrbracket{T}[n+1]$ to $\PBWT$}
	\label{alg:update2}
	\SetVlineSkip{0.5mm}
	\Fn{$\InsertDol(n, \PBWT, \First, \SFC, k)$}{
		$x = \PBWT[k]$\tcp*{$=\llrrbracket{T}[1]$}
		\uIf{$x \in \Sigma$}{
			$k' = \Select_x(\SFC,1) - |\Sigma_{< x}| - 1 + \Rank_x(\PBWT, k)$\tcp*{$\Pos = |S|_{\Sigma_{< c}} + |\{\, i \mid \PBWT^{\circ}_{T}[i] = c,\, 1 \le i \le \Dol \,\}|$}
			$\Insert_x(\SFC,\Select_x(\SFC,1))$\;
		}
		\Else{
			$k' = 1+|\SFC|-|\Sigma|$\tcp*{$= 1+|S|_\Sigma$}
			\lFor*{$y = 1$ \textbf{to} $x$}{%
				$k' = k' + \Rank_y(\PBWT, k-1)$\tcp*[r]{Term (1)}\label{ln:pos1}}
			$j = 0$\;
			\For{$y = 0$ \textbf{to} $x-1$\label{ln:range1}}{%
				\If{$\Rank_y(\LCP, k-1) \ne 0$}{
					$j = \max\{j,\, \Select_y(\LCP,\Rank_y(\LCP, k-1))\}$%
					\tcp*{$j = \max(\{j\} \cup \{\,i \mid \LCP[i]=y \text{ and } 1 \le i < \Dol \,\})$}
				}
			}
			\lFor*{$y = x+1$ \textbf{to} $|\Pi|$}{%
				$k' = k' + \Rank_y(\PBWT, j)$\tcp*[r]{Term (2)}}\label{ln:pos2}
			$j = n$\tcp*{$j_0=n$}
			\For(\tcp*[f]{Term (3)}){$y = 1$ \textbf{to} $x-1$\label{ln:pos3}}{
				\If{$\Rank_{y-1}(\LCP, k-1) < \Rank_{y-1}(\LCP, n)$}{
					$j = \min\{j,\, \Select_{y-1}(\LCP,\Rank_{y-1}(\LCP, k-1)+1) \}$\label{ln:jy}%
					\tcp*{$j_y = \min(\{j_{y-1}\} \cup \{\,i \mid \LCP[i]=y-1 \text{ and } \Dol \le i \le n \,\})$}
					$k' = k' + \Rank_y(\PBWT, j-1) - \Rank_y(\PBWT, k)$\;
				}
			}
		}
		$\Insert_{\$}(\PBWT,k')$;
		$\Insert_{x}(\First,k')$\;
		\textbf{return} $\PBWT, \First, \SFC, k'$\;
	}
\end{algorithm2e}

Based on \Cref{lem:postatic,lem:posparameter}, \Cref{alg:update2} finds the key position $\Pos$.

For handling the case $c \in \Sigma$, we maintain a dynamic array $\SFC$ by which one can obtain the value $|T|_{\Sigma_{< c}}=|S|_{\Sigma_{< c}}$ quickly.
The array $\SFC_S$ can be seen as a string of the form
\(
	\SFC_S = a_1^{|S|_{a_1}+1} \dots a_\sigma^{|S|_{a_\sigma}+1} 
\),
where $a_1,\dots,a_\sigma$ enumerate the static characters of $\Sigma$ in the lexicographic order ($\sigma=|\Sigma|$) and $a^s$ denotes the sequence of $a$ of length $s$.
Then, $|T|_{\Sigma_{< c}} = \Select_c(\SFC_S,1) - |\Sigma_{< c}| - 1$.
The other term $|\{\, i \mid \PBWT^{\circ}_{T}[i] = c,\, 1 \le i \le \Dol \,\}|$ in \Cref{lem:postatic} is calculated as $\Rank_c(\PBWT_T^\circ, k_S)$.
We remark $\SFC_S$ has $a^{|S|_{a}+1}$ rather than $a^{|S|_{a}}$ so that $\Select_c(\SFC,1)$ is always defined.

Suppose $c \in \Pi$.
The term $|T|_\Sigma$ of the equation of \Cref{lem:posparameter} is calculated as $|T|_\Sigma = |S|_\Sigma = |\SFC|-|\Sigma|$. 
Let $x = \PBWT^{\circ}_{T}[\Dol]$.
Term~(\ref{eq:pospara3}) is obtained at \Cref{ln:pos1} by 
\[
\text{(\ref{eq:pospara3})}
= \textstyle\sum_{y=1}^x \Rank_y(\PBWT_T^\circ,\Dol-1)
\,.\]
Concerning Term~(\ref{eq:pospara4}), we first find the range of $i < \Dol$ satisfying $\ell_i < x$.
By \Cref{lem:lcpmin}, $\ell_i = \min_{i \le j < \Dol} \LCP_S[j]$.
Thus, for any $i < \Dol$, $\ell_i < x$ iff $i \le j_* = \max\{\,j \mid \LCP_S[j] < x,\ j < \Dol \,\}$.
The \textbf{for} loop of \Cref{ln:range1} computes such $j_*$.
Then, (\ref{eq:pospara4}) is computed  at \Cref{ln:pos2} as 
\begin{align*}
\text{(\ref{eq:pospara4})}
 =  |\{\, i \mid x < \PBWT^{\circ}_{T}[i],\ 1 \le i \le j_* \,\}|
= \textstyle\sum_{y=x+1}^{|\Pi|} \Rank_y(\PBWT^\circ_T, j_*)
\,.\end{align*}
We compute Term~\text{(\ref{eq:pospara5})} by summing up the numbers of positions $i > \Dol$ such that $\PBWT_T^\circ[i] = y \le \ell_i$ for all $y =1,\dots, x-1$ in the \textbf{for} loop of \Cref{ln:pos3}.
To this end, we find the range of $i > \Dol$ such that $\ell_i \ge y$.
By \Cref{lem:lcpmin}, $\ell_i = \min_{\Dol \le j < i} \LCP_S[j]$.
Thus, for every $i > \Dol$, $\ell_i \ge y$ iff $i < j_y = \min \{\,j \mid \LCP_S[j] < y,\ \Dol \le j \le n \,\}$.
Note that $j_{y} = \min(\{j_{y-1}\} \cup \{\,j \mid \LCP_S[j] = y-1,\ \Dol \le j \le n \,\})$ for any $y \ge 1$ assuming $j_0=n$.
Line~\ref{ln:jy} computes $j_y$ as the first occurrence of $y-1$ after those in $\LCP[1:\Dol-1]$.
Then, (\ref{eq:pospara5}) is calculated by
\begin{align*}
\text{(\ref{eq:pospara5})}
	&= |\{\, i \mid 1 \le \PBWT^{\circ}_{T}[i] \le x-1, \ \Dol < i < j_y \,\}|
\\	&= \textstyle\sum_{y=1}^{x-1} \big(\Rank_y(\PBWT^\circ_T, j_y-1) - \Rank_y(\PBWT^\circ_T, \Dol) \big)
\,.\end{align*}

\begin{restatable}{lemma}{lemupdatetwo}
	\label{lem:update2}
	\Cref{alg:update2} computes $\PBWT_{T}$, $\First_{T}$, $\SFC_T$, and $k_T$ in $O(|\Pi| \frac{\log n}{\log \log n})$ amortized time.
\end{restatable}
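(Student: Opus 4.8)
The plan is to separate the claim into correctness and running time, and within correctness to handle the two insertions, the computation of $\Pos$, and the update of $\SFC$ in turn; most of the term-by-term bookkeeping has already been carried out in the discussion preceding the lemma, so the proof mainly assembles those pieces.

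First I would argue that, once we know the computed value $k'$ equals $\Pos$, the final two insertions produce the desired arrays. By definition $\PBWT^\circ_T$ and $\First^\circ_T$ are the last and first columns of the matrix $(\llrrbracket{T_{\RA_S[i]}})_{i=1}^n$, which is obtained from the rotation sort matrix $(\llrrbracket{T_{\RA_T[i]}})_{i=1}^{n+1}$ of $T$ by deleting its $\Pos$-th row (using \Cref{cor:RAT}). Since $T_{\RA_T[\Pos]} = T_{n+1} = T$, the deleted row has last entry $\llrrbracket{T}[n+1] = \$$ and first entry $\llrrbracket{T}[1] = x$, where $x = \PBWT[k] = \PBWT^\circ_T[\Dol]$. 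Hence reinserting this row at position $\Pos$, i.e.\ $\Insert_{\$}(\PBWT, k')$ and $\Insert_x(\First, k')$, recovers $\PBWT_T$ and $\First_T$ exactly.

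Next I would verify $k' = \Pos$ and the update of $\SFC$. For $c \in \Sigma$ we have $x = \llrrbracket{T}[1] = c$, and by \Cref{lem:postatic} it suffices to note that $\Select_x(\SFC,1) - |\Sigma_{<x}| - 1 = |S|_{\Sigma_{<c}} = |T|_{\Sigma_{<c}}$ from the stated shape of $\SFC_S$, and that $\Rank_x(\PBWT,k)$ counts the indices $i \le \Dol$ with $\PBWT^\circ_T[i] = c$; the subsequent $\Insert_x(\SFC, \Select_x(\SFC,1))$ raises the multiplicity of $c$ by one, turning $\SFC_S$ into $\SFC_T$. For $c \in \Pi$ the array $\SFC$ is left unchanged, which is correct because $|T|_a = |S|_a$ for every $a \in \Sigma$, and $k' = \Pos$ follows from \Cref{lem:posparameter} by matching the base value $1 + |S|_\Sigma = 1 + |T|_\Sigma$ together with Terms~(1)--(3) to \eqref{eq:pospara3}--\eqref{eq:pospara5}, as spelled out before the lemma. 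The two points that require care are the range computations: that the loop over $y = 0,\dots,x-1$ returns $j_* = \max\{\,j \mid \LCP_S[j] < x,\ j < \Dol\,\}$ by \Cref{lem:lcpmin}, and that the running variable $j$ in the Term~(3) loop correctly tracks $j_y = \min\{\,j \mid \LCP_S[j] < y,\ \Dol \le j \le n\,\}$ across iterations.

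Finally, for the time bound I would count dynamic-array operations. In the case $c \in \Pi$ each of the four loops performs at most $|\Pi|$ iterations---the decisive bound being $x = \llrrbracket{T}[1] \le |\Pi|$---and each iteration issues $O(1)$ accesses, ranks, and selects on arrays of size $n$ or $n+1$, each costing $O(\log n / \log\log n)$ amortized by~\cite{Navarro2014}; the two closing insertions add only $O(\log n / \log\log n)$, and the case $c \in \Sigma$ is cheaper still. Summing yields the claimed $O(|\Pi| \log n / \log\log n)$ amortized bound. I expect the main obstacle to be the correctness of the incremental maintenance of $j$ in the Term~(3) loop: one must show that reusing the previous value rather than recomputing a minimum from scratch still yields $j_y$, which hinges on the monotonicity of the sets $\{\,j \mid \LCP_S[j] < y\,\}$ in $y$ and on the recurrence $j_y = \min(\{j_{y-1}\} \cup \{\,j \mid \LCP_S[j] = y-1,\ \Dol \le j \le n\,\})$ initialized with $j_0 = n$.
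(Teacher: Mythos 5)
Your proposal is correct and follows essentially the same route as the paper, which proves this lemma only implicitly through the discussion surrounding \Cref{alg:update2} (the reduction to \Cref{lem:postatic,lem:posparameter}, the term-by-term computation of $\Pos$ via $\Rank$/$\Select$ on $\PBWT^\circ_T$ and $\LCP_S$, the maintenance of $\SFC$, and the $O(|\Pi|)$-iteration loops each costing $O(\log n/\log\log n)$ per dynamic-array operation). Your added observations---that the reinserted row is exactly row $\Pos$ of the rotation sort matrix with entries $\$$ and $\llrrbracket{T}[1]=\PBWT^\circ_T[\Dol]$, and that the incremental maintenance of $j_y$ rests on the recurrence $j_y=\min(\{j_{y-1}\}\cup\{\,j\mid\LCP_S[j]=y-1,\ \Dol\le j\le n\,\})$ with $j_0=n$---are the right points to make explicit and match the paper's inline justifications.
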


\subsection{Step 3: Updating $\LCP$ by $\UpdateLCP$}
What remains to do is updating the arrays $\RM$ and $\LCP$.
On the one hand, updating $\RM$ from $\RM_T^\circ$ to $\RM_T$ is easy.
$\RM_T^\circ[a]$ should be incremented by one just if $\RM_T^\circ[a] \ge k_T$.
Otherwise, $\RM_T[a]=\RM_T^\circ[a]$.
On the other hand, \Cref{lem:equallcp} implies $\LCP_{T}$ is almost identical to $\LCP_{S}$.
\begin{corollary}\label{cor:lcparray}
	$\LCP_{T}[i] = \LCP_{S}[i]$ if $i < \Pos - 1$, and 
	$\LCP_{T}[i] = \LCP_{S}[i-1]$ if $i > \Pos$.
\end{corollary}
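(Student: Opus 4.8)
The plan is to reduce the entire statement to two facts already in hand: \Cref{cor:RAT}, which describes how $\RA_T$ arises from $\RA_S$ by inserting $n+1$ at position $\Pos$, and \Cref{lem:equallcp}, which says that the $\inft$-LCP of any two rotations indexed within $\{1,\dots,n\}$ is preserved when passing from $S$ to $T$. Both cases of the corollary then come out by unfolding the definition $\LCP_T[i]=\Lcp(\lrangle{T_{\RA_T[i]}},\lrangle{T_{\RA_T[i+1]}})$ and rewriting each of the two row indices via \Cref{cor:RAT}. Since the corollary deliberately excludes $i=\Pos-1$ and $i=\Pos$ (those two entries straddle the inserted row and are recomputed directly by $\UpdateLCP$), in every remaining case both $\RA_T[i]$ and $\RA_T[i+1]$ will fall on the ``unshifted'' or ``shifted'' side of $\Pos$ simultaneously.

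First I would treat $i<\Pos-1$. Then both $i$ and $i+1$ are strictly below $\Pos$, so \Cref{cor:RAT} gives $\RA_T[i]=\RA_S[i]$ and $\RA_T[i+1]=\RA_S[i+1]$. Writing $p=\RA_S[i]$ and $q=\RA_S[i+1]$, both lie in $\{1,\dots,n\}$ and are distinct because $\RA_S$ is a bijection; since $\Lcp$ is symmetric I may assume $p<q$ and apply \Cref{lem:equallcp} to get $\Lcp(\lrangle{T_p},\lrangle{T_q})=\Lcp(\lrangle{S_p},\lrangle{S_q})$, whose right-hand side is exactly $\LCP_S[i]$. Hence $\LCP_T[i]=\LCP_S[i]$.

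Next, for $i>\Pos$, both $i$ and $i+1$ exceed $\Pos$, so \Cref{cor:RAT} gives $\RA_T[i]=\RA_S[i-1]$ and $\RA_T[i+1]=\RA_S[i]$; the same symmetry-plus-\Cref{lem:equallcp} argument yields $\LCP_T[i]=\Lcp(\lrangle{S_{\RA_S[i-1]}},\lrangle{S_{\RA_S[i]}})=\LCP_S[i-1]$. Here I would pay attention only to the array boundary: for $\Pos<i\le n$ the defining formula of $\LCP_T$ applies directly, while the degenerate endpoint $i=n+1$ must be checked separately, where $\LCP_T[n+1]=0$ and $\LCP_S[(n+1)-1]=\LCP_S[n]=0$ by definition, so the identity $\LCP_T[i]=\LCP_S[i-1]$ still holds.

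I do not expect a genuine obstacle: the substance of the corollary is carried entirely by the two cited lemmas, and what remains is bookkeeping with index shifts across the insertion point $\Pos$. The only steps that merit care are invoking the symmetry of $\Lcp$ (so that the hypothesis $p<q$ of \Cref{lem:equallcp} can always be met after possibly swapping the two rotations) and confirming that the index ranges keep both $\LCP_S$ and $\LCP_T$ inside their defining formulas except at the zero-valued endpoints, which are handled by inspection.
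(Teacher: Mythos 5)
Your proof is correct and follows exactly the route the paper intends: the corollary is stated as an immediate consequence of \Cref{lem:equallcp} combined with the index shift from \Cref{cor:RAT}, and your write-up just makes the bookkeeping (including the symmetry of $\Lcp$ and the zero-valued endpoint $i=n+1$) explicit. No gaps.
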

By \Cref{cor:lcparray}, we only need to compute $\LCP_{T}[\Pos-1]$ and $\LCP_{T}[\Pos]$, to which \Cref{lem:equallcp} cannot directly be applied.
The following lemma allows us to reduce the calculation of $\LCP_{T}[k] = \Lcp(\lrangle{T_{\RA_T[k]}},\lrangle{T_{\RA_T[k+1]}})$
to that of $\Lcp(\lrangle{T_{\RA_T[\LF_T^{-1}(k)]}},\lrangle{T_{\RA_T[\LF_T^{-1}(k+1)]}})$, to which \Cref{lem:equallcp} may be applied.
\begin{lemma}\label{lem:updatelcparray}
	Let $1 \le i,j \le n+1$,
	$p = \RA_{T}[i]$, $q = \RA_{T}[j]$,
	$\ell = \Lcp(\lrangle{T_{p}},\lrangle{T_{q}})$,
	$i' \!= \LF_T^{-1}(i)$, $j' \!= \LF_T^{-1}(j)$,
 	$p' \!= \RA_{T}[i']$, $q' \!= \RA_{T}[j']$,
 	 and $\ell' \!= \Lcp(\lrangle{T_{p'}},\lrangle{T_{q'}})$.
	\begin{enumerate}
		\item If $\First_T[i] = \First_T[j]  \in \Sigma$, then $\ell = \ell'$.
		\item If $\First_T[i] \ne \First_T[j]$ and either $\First_T[i] \in \Sigma$ or $\First_T[j]  \in \Sigma$, then $\ell = 0$.
		\item If $\First_T[i], \First_T[j] \in \mbb{N}_+$, then
			\begin{align*}
				\ell = 
				\begin{cases}
					\ell' + 1 
					& \text{if } \ell' < \min\{\First_T[i],\First_T[j]\} , \\
					\ell'        
					& \text{if } \ell' \ge \First_T[i] = \First_T[j],\\
					\min\{\First_T[i], \First_T[j]\}  		
					& \text{otherwise}.
				\end{cases}
			\end{align*}		
	\end{enumerate}
\end{lemma}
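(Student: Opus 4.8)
The plan is to reduce the comparison of $\lrangle{T_p}$ with $\lrangle{T_q}$ to that of $\lrangle{T_{p'}}$ with $\lrangle{T_{q'}}$ by exploiting the single-rotation relationship between them. Since $i'=\LF_T^{-1}(i)$ gives $T_{p'+1}=T_p$, the rotation $T_p$ is obtained from $T_{p'}$ by moving the last character to the front: $T_p[1]=T_{p'}[n+1]=:a$ and $T_p[k]=T_{p'}[k-1]$ for $2\le k\le n+1$. First I would compute $\lrangle{T_p}$ from $\lrangle{T_{p'}}$ directly from the definition of prev-encoding. The outcome is that $\lrangle{T_p}[1]=\inft$ when $a\in\Pi$ (resp.\ $\lrangle{T_p}[1]=a$ when $a\in\Sigma$), and that for $k\ge 2$ we have $\lrangle{T_p}[k]=\lrangle{T_{p'}}[k-1]$ at every position except the one corresponding to the first occurrence $m_a=\CL_{T_{p'}}(a)$ of $a$ in $T_{p'}$: there the value changes from $\inft$ to the finite distance $m_a$, because in $T_p$ the character $a$ now has an earlier occurrence at position $1$. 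Establishing this ``one-position edit'' description (by splitting on whether $T_{p'}[k-1]$ is static, a parameter $\ne a$, or equals $a$) is the first key step.

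Next I would pin down what $\First_T[i]$ measures in this picture. Using $\First_T[i]=\llrrbracket{T_p}[1]$ and the counting definition of $\llrrbracket{\cdot}$, the value equals the number of distinct parameter characters in $T_{p'}[1:m_a]$, which in turn equals the number of $\inft$'s in $\lrangle{T_{p'}}[1:m_a]$. Hence $m_a$ is precisely the position of the $\First_T[i]$-th $\inft$ in $\lrangle{T_{p'}}$ (and likewise $m_b$ for $\First_T[j]$ in $\lrangle{T_{q'}}$). This is the crucial bridge: it translates the numeric keys $\First_T[i],\First_T[j]$ into positions measured against $\ell'=\Lcp(\lrangle{T_{p'}},\lrangle{T_{q'}})$, the number of $\inft$'s in the common prefix of $\lrangle{T_{p'}}$ and $\lrangle{T_{q'}}$.

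With this in hand the three items follow by locating the first mismatch. When $\First_T[i]\ne\First_T[j]$ with one of them static, the strings $\lrangle{T_p},\lrangle{T_q}$ already disagree at position $1$ (static vs.\ $\inft$, or two distinct static characters), giving $\ell=0$; when both are the same static character, position $1$ contributes no $\inft$ and positions $\ge 2$ reproduce $\lrangle{T_{p'}},\lrangle{T_{q'}}$ verbatim (a leading static character is never the previous occurrence of a parameter), so $\ell=\ell'$. For item~3 both leading characters are parameters, so position $1$ is $\inft$ in both and contributes exactly one $\inft$; it then remains to compare the two one-position-edited copies of $\lrangle{T_{p'}}[2:]$ and $\lrangle{T_{q'}}[2:]$. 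Writing $L$ for the length of the common prefix of $\lrangle{T_{p'}}$ and $\lrangle{T_{q'}}$ (so $\ell'$ is the number of $\inft$'s in the first $L$ positions and, using the uniqueness of $\$$, $L+1\le n$), the $\inft$-positions of the two strings coincide on $[1:L]$; I would then split on how $m_a,m_b$ sit relative to $L$. If $\ell'<\min\{\First_T[i],\First_T[j]\}$ both edits lie strictly after $L$ and the first mismatch is still at $L+1$, yielding $\ell=\ell'+1$. If $\First_T[i]=\First_T[j]\le\ell'$ both edits fall on the same position $\le L$ and turn a shared $\inft$ into the same finite value, so the prefix survives but loses one $\inft$, yielding $\ell=\ell'$. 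Otherwise (which forces $\First_T[i]\ne\First_T[j]$ and $\ell'\ge\min$) the smaller key's edit, say at $m_a\le L$, changes an $\inft$ into a finite value while the other copy keeps $\inft$ there, creating a mismatch at $m_a$ and giving $\ell=\min\{\First_T[i],\First_T[j]\}$.

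The main obstacle will be the careful bookkeeping in item~3 when an edit coincides with the first genuine mismatch $L+1$ of $\lrangle{T_{p'}},\lrangle{T_{q'}}$: there I must rule out that replacing an $\inft$ by a finite distance accidentally creates a match. This is settled by the observation that any finite prev-encoding value at a position $r$ is at most $r-1$, whereas the inserted value at $m_a$ equals $m_a$; since at $L+1$ the competing value comes from a position $\le L+1$, it is strictly smaller than the inserted value, so no spurious agreement arises. Handling the degenerate situations---$a$ occurring only once (so $m_a=n+1$ and no edit falls inside the length-$n$ window) and the case $i=j$---is routine, and I would dispatch them by the same position-of-$\inft$ accounting.
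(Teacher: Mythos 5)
Your proof is correct and takes essentially the same approach as the paper's: both describe $\lrangle{T_p}$ as a one-position edit of $\inft\lrangle{T_{p'}}$ (resp.\ $a\lrangle{T_{p'}}$) located at the first occurrence of the rotated character, read $\First_T[i]$ as the $\inft$-rank of that position, and case-split on where the two edit positions fall relative to the longest common prefix of $\lrangle{T_{p'}}$ and $\lrangle{T_{q'}}$. Your explicit check that an edit landing on the first mismatch position cannot create a spurious agreement is a detail the paper's proof leaves implicit.
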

\begin{proof}
	Let $T_p = aU$, $T_q = bV$, $T_{p'} = Ua$ and $T_{q'} = Vb$.
	
1. If $a = b \in \Sigma$, then $\ell = \ell' = \Lcp(\lrangle{U},\lrangle{V})$.

2. In the case $a \ne b$ and $\{a, b\} \cap \Sigma \neq \emptyset$, clearly $\lrangle{T_{p}}[1] \ne \lrangle{T_{q}}[1]$. Thus $\ell = 0$.
		
3. In the case $a,b \in \Pi$, let $W$ be the longest common prefix of $\lrangle{T_{p'}}$ and $\lrangle{T_{q'}}$,
		 $u$ and $v$ be the first occurrence positions of $a$ in $T_{p'}$ and $b$ in $T_{q'}$, respectively, and $w$ be the $\ell'$-th occurrence position of $\inft$ in $W$.

		Suppose $\ell' < \min\{\First_T[i],\First_T[j]\}=\min\{\PBWT_T[i'],\PBWT_T[j']\}$. That is, $|W|_\inft \linebreak[2]<\linebreak[4] \min\{|\lrangle{T_{p'}}[:u]|_\inft,\linebreak[2]|\lrangle{T_{q'}}[:v]|_\inft\}$.
		This means $|W| < \min\{u,v\}$ and thus $\inft W$ is the longest common prefix of $\lrangle{T_{p}}$ and $\lrangle{T_{q}}$.
		Thus, we have $\ell = |\inft W|_\inft = \ell' + 1$.

		Suppose $\ell' \ge \First_T[i] = \First_T[j]$, i.e., $|W|_\inft \ge |\lrangle{T_{p'}}[:u]|_\inft = |\lrangle{T_{q'}}[:v]|_\inft$.
		Then $|W[:u]|_\inft = |W[:v]|_\inft$ and $W[u]=W[v]=\inft$ implies $u=v$. 
		Let $Z$ be the longest common prefix of $\lrangle{T_{p}}$ and $\lrangle{T_{q}}$.
		Then, $W$ and $Z$ can be written as $W = X \ift Y$ and $Z = \ift X u Y$, where $|X|=u-1$.
		Therefore, $\ell = \ell'$.
		
		Otherwise, $\First_T[i] \ne \First_T[j]$ and $\ell' \ge \min\{\First_T[i], \First_T[j]\}$.
		Assume $\First_T[i] < \First_T[j]$ (the case $\First_T[j] < \First_T[i]$ is symmetric).
		Then $u \le |W|$.
		Moreover, we have $u < v$, since otherwise, $\lrangle{T_{q'}}[:v]$ had to be a prefix of $\lrangle{T_{p'}}[:u]$, which is impossible by $\First_T[i] < \First_T[j]$.
		Let $\lrangle{T_{p}[:|W|+1]} = \ift X u Y$, where $|X|=u-1$.
		Then we have $\lrangle{T_{q}[:|W|+1]} = \ift X \ift Y'$ for some $Y' \in (\Sigma \cup \mbb{N}_\inft)^*$.
		Thus $\ell = |\ift X|_\inft = \First_T[i]$.
		\qed
\end{proof}
One can compute $\ell'=\Lcp(\lrangle{T_{p'}},\lrangle{T_{q'}})$ in Lemma~\ref{lem:updatelcparray} for $1 \le p'<q' \le n$ using Lemmas~\ref{lem:equallcp} and~\ref{lem:lcpmin} as
\begin{multline*}
\Lcp(\lrangle{T_{p'}},\lrangle{T_{q'}}) =
	\Lcp(\lrangle{S_{p'}},\lrangle{S_{q'}})
	= \min\{\,\LCP_S[h] \mid i' \le h < j'\,\}
	\\ = \min(\{0\}\cup\{\,y \mid \Rank_y(\LCP_S, i'-1) \ne \Rank_y(\LCP_S, j'-1) \,\})\,.
\end{multline*}
Finally, when $q' = n+1$, we have $\First_T[j] =\$ \neq \First_T[i]$, and thus $\Lcp(\lrangle{T_{p}},\lrangle{T_{q}}) = 0$.
\Cref{alg:updatelcp} computes $\LCP_{T}[i]$ using $\First_T$, $\PBWT_T$, and $\LCP_{S}$.

\begin{algorithm2e}[t!]
	\caption{Updating $\LCP[i]$}
	\label{alg:updatelcp}
	\SetVlineSkip{0.5mm}
	\Fn{$\UpdateLCP(\PBWT, \First, \LCP, i)$}{
		$j = i + 1$;
		$x = 0$\;
		\uIf{$\First[i] = \First[j]$ or $\First[i],\First[j] \in \mbb{N}_+$}{%
			$i' = \Select_{\First[i]}(\PBWT, \Rank_{\First[i]}(\First,i))$\tcp*{$i'=\LF_{T}^{-1}(i)$\phantom{${}+1$}}
			$j' = \Select_{\First[j]}(\PBWT, \Rank_{\First[j]}(\First,j))$\tcp*{$j'=\LF_{T}^{-1}(i+1)$}
			\For{$y = |\Pi| $ \textbf{downto} $0$}{
				\lIf{$\Rank_y(\LCP, i'-1) \ne \Rank_y(\LCP, j'-1)$}{%
					$x = y$}
			}
			\tcp{$x = \Lcp(\lrangle{S_{\RA_S[i']}},\lrangle{S_{\RA_S[j']}})$}
			\uIf{$\First[i],\First[j] \in \mbb{N}_+$}{%
				\lIf{$x < \min\{\First[i], \First[j]\}$}{%
					$x = x + 1$%
				}
				\lElseIf{$\First[i] \ne \First[j]$}{%
					$x = \min\{\First[i], \First[j]\}$%
				}
			}
		}
		\textbf{return} $x$\;
	}
\end{algorithm2e}

\begin{restatable}{lemma}{lemupdatelcp}
	\label{lem:updatelcp}
	\Cref{alg:updatelcp} computes $\LCP_{T}[i]$ in $O(|\Pi| \frac{\log n }{ \log \log n})$ amortized time.
\end{restatable}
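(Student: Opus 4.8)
The plan is to prove correctness first, by arguing that the branches of \Cref{alg:updatelcp} realize, step for step, the case analysis of \Cref{lem:updatelcparray} for the two adjacent sorted rows $i$ and $j=i+1$ of $T$, and then to bound the running time by counting dynamic-array operations.

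For correctness I would match each branch to a case. The initialization $x=0$, together with the failure of the outer guard when $\First_T[i]\neq\First_T[j]$ while one of them is static, yields the return value $0$, which is exactly case~2 of \Cref{lem:updatelcparray}. When the guard holds we enter the block; the assignments $i'=\LF_T^{-1}(i)$ and $j'=\LF_T^{-1}(i+1)$ are the inverse-$\LF$ formula, valid by \Cref{cor:Kim4} and well defined because $\First_T$ and $\PBWT_T$ contain each symbol equally often (they are linked through $\LF_T$). The downward \textbf{for} loop returns $\min(\{0\}\cup\{\,y\mid \Rank_y(\LCP,i'-1)\neq\Rank_y(\LCP,j'-1)\,\})$; since $\Rank_y(\LCP,i'-1)\neq\Rank_y(\LCP,j'-1)$ holds exactly when $y$ occurs in $\LCP$ strictly between positions $\min(i',j')$ and $\max(i',j')$, this quantity is symmetric in $i'$ and $j'$ and equals the range minimum of $\LCP_S$, which by \Cref{lem:equallcp} and \Cref{lem:lcpmin} should be $\ell'=\Lcp(\lrangle{S_{p'}},\lrangle{S_{q'}})$ once the indexing is reconciled (below). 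Finally, for $\First_T[i]=\First_T[j]\in\Sigma$ the block returns $x=\ell'$ (case~1), and for $\First_T[i],\First_T[j]\in\mbb{N}_+$ the nested \textbf{if} applies exactly the three subcases of case~3, returning $\ell'+1$, $\min\{\First_T[i],\First_T[j]\}$, or $\ell'$.

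I would dispose of the boundary cases $p'=n+1$ or $q'=n+1$ first: these occur precisely when one of the two rows is the rotation beginning with $\$$, forcing $\First_T[i]=\$$ or $\First_T[j]=\$$; as $\$$ is unique, the guard then fails and the algorithm returns $0$, matching $\Lcp(\lrangle{T_p},\lrangle{T_q})=0$. The delicate point, and the one I expect to be the main obstacle, is the index bookkeeping: $i'$ and $j'$ are rows of the size-$(n+1)$ sorted matrix of $T$, whereas the loop reads the range minimum from $\LCP_S$, which is indexed by the size-$n$ sorted matrix of $S$. Because \Cref{cor:RAT} shows the two matrices differ only by the inserted row at $\Pos$, and \Cref{cor:lcparray} pins down how $\LCP_T$ relates to $\LCP_S$, I would verify that for the two calls made by \Cref{alg:updateall}, namely $i\in\{\Pos-1,\Pos\}$, the range $[\min(i',j'),\max(i',j'))$ read from $\LCP_S$ coincides with the intended $S$-frame range of $p'$ and $q'$, reconciling the off-by-one that the insertion at $\Pos$ could otherwise introduce. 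This case-by-case check is the genuinely non-mechanical part of the argument.

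For the running time, everything outside the loop is a constant number of $\Access$, $\Rank$, and $\Select$ queries on arrays of size $O(n)$. The longest common prefix of two pv-strings contains at most $|\rest{\Pi}{T}|\le|\Pi|$ occurrences of $\inft$, so $\LCP_S$ takes values in $\{0,1,\dots,|\Pi|\}$ and the \textbf{for} loop performs $O(|\Pi|)$ iterations, each with two $\Rank$ queries on $\LCP$. By the dynamic-array structure of Navarro and Nekrich~\cite{Navarro2014} each such operation costs $O(\frac{\log n}{\log\log n})$ amortized time, so the whole procedure runs in $O(|\Pi|\frac{\log n}{\log\log n})$ amortized time, as claimed.
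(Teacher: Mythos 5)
Your overall route is the same as the paper's (which justifies this lemma only through the discussion surrounding \Cref{alg:updatelcp}): correctness by matching the branches of \Cref{alg:updatelcp} to the three cases of \Cref{lem:updatelcparray}, computing $\ell'$ as a range minimum over $\LCP_S$ via $\Rank$ queries, disposing of the $q'=n+1$ boundary through the guard, and charging $O(|\Pi|)$ dynamic-array operations at $O(\frac{\log n}{\log \log n})$ amortized time each. Those parts are fine, and your remark that the rank-based expression is symmetric in $i'$ and $j'$ (so the algorithm need not know which is smaller) is worth making explicit.

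The gap is precisely in the step you defer. You correctly identify that $i'=\LF_T^{-1}(i)$ and $j'=\LF_T^{-1}(j)$ are indices into the $(n+1)$-row sorted matrix of $T$ while $\LCP$ is still $\LCP_S$, indexed by the $n$-row matrix of $S$, and you propose to resolve this by checking that the range $[\min(i',j'),\max(i',j'))$ read from $\LCP_S$ coincides with the intended $S$-frame range of $p'$ and $q'$. That verification cannot succeed as stated, because the two ranges genuinely differ whenever $i'$ or $j'$ exceeds $\Pos$, and this happens for the actual calls made by \Cref{alg:updateall}. Concretely, take $\Pi=\{\mathtt{x},\mathtt{y}\}$, $S=\mathtt{xy\$}$, $c=\mathtt{x}$, so $T=\mathtt{xxy\$}$. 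Then $\RA_T=[1,2,4,3]$ and $\Pos=3$, and for the call with $i=\Pos=3$, $j=4$ one gets $i'=4$, $j'=2$, $p'=3$, $q'=2$. The intended $S$-frame range is $\LCP_S[2..2]=\{1\}$, matching $\Lcp(\lrangle{S_2},\lrangle{S_3})=\Lcp(\inft\$\inft,\inft\inft\$)=1$, but the algorithm reads $\LCP_S[2..3]=\{1,0\}$ and obtains $\ell'=0$. The final output is nevertheless correct here ($\First_T[3]=1\ne 2=\First_T[4]$, and both the branch ``$\ell'<\min$'' with $\ell'=0$ and the branch ``otherwise'' with $\ell'=1$ return $1$), but only because case~3 of \Cref{lem:updatelcparray} is insensitive to an off-by-one exactly at the boundary $\ell'=\min\{\First_T[i],\First_T[j]\}-1$ --- not because the ranges agree. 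So completing the proof requires something you have not supplied: either an argument that the computed $\ell'$ can deviate from the true one only in ways that leave the output of the case analysis unchanged, or a translation of $i',j'$ back to $S$-frame positions (via \Cref{cor:RAT}) before querying $\LCP_S$. To be fair, the paper's own displayed range-minimum formula silently equates the two frames and does not address this point either.
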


By Lemmas~\ref{lem:update1}, \ref{lem:update2}, and \ref{lem:updatelcp}, we have the following theorem.
\begin{theorem}
	Given $c \in \Sigma \cup \Pi$, $n=|S|$, $\PBWT=\PBWT_S$, $\First=\First_S$, $\SFR=\SFR_S$, $\SFL=\SFL_S$, $\RM=\RM_S$, $\SFC=\SFC_S$, and $\LCP=\LCP_S$ for some $S \in (\Sigma \cup \Pi)^*$,
	\Cref{alg:updateall} computes $|T|$, $\PBWT_T$, $\First_T$, $\SFR_T$, $\SFL_T$, $\RM_T$, $\SFC_T$, and $\LCP_T$ for $T=cS$ in $O(|\Pi| \frac{\log n}{\log \log n})$ amortized time per input character.
\end{theorem}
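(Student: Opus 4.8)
The statement is the top-level correctness-and-complexity claim for the driver routine \Cref{alg:updateall}, and since all the real work has been delegated to \Cref{alg:update1,alg:update2,alg:updatelcp}, the plan is to treat the proof as a verification that \Cref{alg:updateall} correctly \emph{composes} these subroutines and then to sum the (amortized) costs. First I would fix the input invariant: on entry the arguments are exactly the data structures of $S$ with $n=|S|$. Reading the lines in order, \Cref{ln:dol} sets $k=\Select_{\$}(\PBWT_S,1)$; since $\$$ occurs uniquely in $\PBWT_S$ and marks the row $\RA_S^{-1}[n]$, this is exactly $k_S=\Dol$. Feeding $k=\Dol$ into \Cref{ln:updateLF} meets the precondition of \Cref{lem:update1}, so after that call $\PBWT,\First$ hold the length-$n$ intermediates $\PBWT^\circ_T,\First^\circ_T$ and $\SFR,\SFL,\RM$ hold $\SFR_T,\SFL_T,\RM^\circ_T$. \Cref{ln:insertion} then meets the precondition of \Cref{lem:update2}, so afterwards $\PBWT,\First$ are the final length-$(n+1)$ arrays $\PBWT_T,\First_T$, $\SFC$ is $\SFC_T$, and $k'=k_T=\Pos$.

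It remains to check the two bookkeeping blocks. For $\RM$ I would appeal to the observation recorded just before \Cref{cor:lcparray}: passing from $\RM^\circ_T$ to $\RM_T$ shifts only those entries sitting at or below the newly inserted row, i.e.\ $\RM_T[a]=\RM^\circ_T[a]+[\RM^\circ_T[a]\ge k_T]$, which is precisely the \textbf{foreach} loop over $a\in\Pi$. For $\LCP$ I would rely on \Cref{cor:lcparray}: every position $i<\Pos-1$ already carries $\LCP_S[i]=\LCP_T[i]$, and every position $i>\Pos$ becomes correct automatically once one new entry is inserted at $\Pos$ (the insertion realizes the shift $\LCP_T[i]=\LCP_S[i-1]$), so only $\LCP_T[\Pos]$ and $\LCP_T[\Pos-1]$ need to be computed. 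The subtle point to verify is the \emph{ordering} of the two $\UpdateLCP$ calls and the final insertion: both calls are made while $\LCP$ still equals the unmodified $\LCP_S$ of length $n$, so that their internal $\Rank/\Select$ queries on $\LCP$ correctly compute $\Lcp$ over $S$ through \Cref{lem:lcpmin,lem:equallcp}; only afterwards is $\LCP[\Pos-1]$ overwritten in place and the value $x=\LCP_T[\Pos]$ inserted at $\Pos$. Since overwriting position $\Pos-1$ and inserting at $\Pos$ leave positions $<\Pos-1$ untouched and shift positions $\ge\Pos$ by one, the resulting array is exactly $\LCP_T$; correctness of each recomputed value is \Cref{lem:updatelcp}.

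For the running time I would simply add up amortized costs. The three subroutine calls cost $O(|\Pi|\frac{\log n}{\log\log n})$ each by \Cref{lem:update1,lem:update2,lem:updatelcp}, the two extra $\UpdateLCP$ calls cost $O(|\Pi|\frac{\log n}{\log\log n})$ each, the \textbf{foreach} loop touches $O(|\Pi|)$ entries at cost at most $O(\frac{\log n}{\log\log n})$ apiece, and the closing $\Insert_x(\LCP,\cdot)$ costs $O(\frac{\log n}{\log\log n})$. This is an $O(1)$-bounded number of terms of the form $O(|\Pi|\frac{\log n}{\log\log n})$ plus $O(|\Pi|)$ smaller terms, so the total is $O(|\Pi|\frac{\log n}{\log\log n})$ amortized per input character, and $|T|=n+1$ is returned for free.

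The main obstacle I anticipate is not a deep argument but the disciplined bookkeeping of \emph{which version and which length} of each array is live at each line — in particular that $\PBWT^\circ_T,\First^\circ_T$ have length $n$ whereas the final arrays have length $n+1$, and that $\LCP$ must remain equal to $\LCP_S$ (length $n$) across both $\UpdateLCP$ calls before the single insertion. A secondary point worth an explicit check is the behaviour at the extreme positions $\Pos=1$ and $\Pos=n+1$, where a neighbour queried by $\UpdateLCP$ lands on the array boundary and the convention $\LCP_T[n+1]=0$ must be respected; these should be dispatched directly from the definition of $\LCP_T$ together with the boundary cases of \Cref{lem:updatelcparray}.
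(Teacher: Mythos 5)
Your proposal is correct and follows exactly the route the paper takes: the paper's entire justification is the one sentence ``By Lemmas~\ref{lem:update1}, \ref{lem:update2}, and~\ref{lem:updatelcp}, we have the following theorem,'' and you simply make explicit the composition of the subroutines, the $\RM$/$\LCP$ bookkeeping via \Cref{cor:lcparray}, and the summation of a constant number of $O(|\Pi|\frac{\log n}{\log\log n})$ costs. The only blemish is a harmless miscount of the subroutine calls (there is one $\UpdateLF$, one $\InsertDol$, and two $\UpdateLCP$ calls, not three calls plus two extra ones), which does not affect the bound.
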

\begin{corollary}
	For a p-string $T$ of length $n$, $\PBWTF_{T}$ can be computed in an online manner by reading $T$ from right to left in $O(n|\Pi| \frac{\log n}{\log \log n})$ time.
\end{corollary}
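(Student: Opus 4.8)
The plan is to obtain the Corollary from the Theorem by induction on the number of characters scanned, so the argument reduces to fixing a correct base configuration and then summing the amortized per-step costs.

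First I would set up the base case. Because $T$ ends with the unique sentinel $\$$, the right-to-left scan starts by reading $\$$, and the length-one p-string $S=\$$ has trivial data structures: $\PBWT_S=\First_S=\$$, $\LCP_S[1]=0$, the parameter-indexed arrays $\SFL_S,\SFR_S,\RM_S$ are all zero (no parameter character occurs), and $\SFC_S$ is the string $a_1^{|S|_{a_1}+1}\cdots a_\sigma^{|S|_{a_\sigma}+1}$. All of these can be initialized in $O(|\Sigma|)$ time, which is absorbed into the final bound since $|\Sigma|\le|T|$. Thereafter every character read is in $\Sigma\cup\Pi\setminus\{\$\}$, matching the precondition of $\UpdateAll$.

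Next comes the inductive step. Assume that after scanning the length-$k$ suffix $S=T[n-k+1:n]$ we hold $\PBWT_S,\First_S,\SFR_S,\SFL_S,\RM_S,\SFC_S,\LCP_S$. Reading $c=T[n-k]$ and calling $\UpdateAll$ produces, by the Theorem, precisely the corresponding eight outputs for $cS=T[n-k:n]$; the Theorem's correctness is in turn guaranteed by Lemmas~\ref{lem:update1}, \ref{lem:update2}, and \ref{lem:updatelcp}, which certify that $\UpdateLF$, $\InsertDol$, and $\UpdateLCP$ each behave as specified. Iterating until $T[1]$ has been consumed yields $\PBWT_T$, and $\PBWTF_T=\PBWT_T$ by definition, establishing correctness.

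Finally I would bound the running time by summing the at most $n$ amortized step costs. Every $\UpdateAll$ call touches dynamic arrays of size at most $n+1$, so by the Theorem each step runs in $O(|\Pi|\frac{\log n}{\log\log n})$ amortized time once the current length is replaced by the monotone upper bound $n$; summing over the $O(n)$ characters gives $O(n|\Pi|\frac{\log n}{\log\log n})$. The one point needing care — and the main, if mild, obstacle — is that the Navarro--Nekrich operation bounds are amortized rather than worst-case. I would dispose of it by recalling that amortized costs are additive along any sequence of operations (the total actual work is bounded by the sum of the amortized costs via the data structure's potential), so telescoping the per-step amortized bounds over the entire online construction is legitimate and yields the stated total.
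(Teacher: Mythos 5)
Your proposal is correct and matches the paper's (implicit) argument: the corollary is stated as an immediate consequence of the theorem, obtained by initializing the structures for the sentinel $\$$ and then invoking $\UpdateAll$ once per remaining character, summing the amortized per-step bounds. Your added care about the base-case initialization and the additivity of amortized costs is sound but does not change the route.
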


\bibliographystyle{plain}
\bibliography{ref}

\appendix

\section{Proofs}
\begin{proposition}\label{lem:llrrbracketEquiv}
For any p-strings $S$ and $T$,  $S \approx T$ if and only if $\llrrbracket{S}=\llrrbracket{T}$.
\end{proposition}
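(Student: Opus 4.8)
The plan is to reduce the claim to the known equivalence $S \approx T \iff \lrangle{S} = \lrangle{T}$~\cite{PMA} by showing that $\lrangle{\cdot}$ and $\llrrbracket{\cdot}$ carry exactly the same information, namely the length, the static characters together with their positions, and the partition of the parameter positions into classes of occurrences of a common parameter character. Throughout I may assume $|S| = |T| = n$, since otherwise $\llrrbracket{S}$ and $\llrrbracket{T}$ have different lengths and both sides fail. Because $\llrrbracket{T}[i] \in \Sigma$ exactly when $T[i] \in \Sigma$ (and then $\llrrbracket{T}[i] = T[i]$), while $\llrrbracket{T}[i] \in \mbb{N}_+$ otherwise, both the static positions with their values and the set of parameter positions are immediately readable from $\llrrbracket{T}$; the content of the lemma concerns the parameter positions only.

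For the forward direction, suppose $S \approx T$ via a bijection $f$ on $\Sigma \cup \Pi$ with $\rest{f}{\Sigma} = \mrm{id}$ and $T = f(S)$ applied positionwise; note that $f$ restricts to a bijection on $\Pi$. The static positions are handled as above. For a parameter position $i$, rotation commutes with positionwise relabeling, so $T_{n-i} = f(S_{n-i})$; injectivity of $f$ gives $\CL_{T_{n-i}}(T[i]) = \CL_{S_{n-i}}(S[i]) =: \ell$; and since $\rest{f}{\Pi}$ is a bijection it preserves the number of distinct parameter characters in a factor, i.e. $|\rest{\Pi}{T_{n-i}[1:\ell]}| = |\rest{\Pi}{S_{n-i}[1:\ell]}|$. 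Hence $\llrrbracket{T}[i] = \llrrbracket{S}[i]$, proving $\llrrbracket{S} = \llrrbracket{T}$.

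The converse is the substantial direction, and I would prove it by recovering the parameter partition from $\llrrbracket{T}$ alone. Reading $\llrrbracket{T}$ cyclically, for a parameter position $i$ let $\mrm{next}(i)$ be the first position after $i$ (cyclically) holding the same parameter character; the cycles of the permutation $\mrm{next}$ are exactly the desired classes, so it suffices to show $\mrm{next}$ is determined by $\llrrbracket{T}$. By definition $\llrrbracket{T}[i]$ equals the number of distinct parameter characters seen while scanning cyclically from $i{+}1$ up to and including $\mrm{next}(i)$. The main obstacle is the apparent circularity: to locate $\mrm{next}(i)$ by this distinct count one must already know which scanned positions repeat an earlier class. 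I would break it by strong induction on the value $d = \llrrbracket{T}[i]$. The base case $d=1$ forces $\mrm{next}(i)$ to be the immediately following parameter position. For the inductive step, the key observation is that every matched pair $(q, \mrm{next}(q))$ lying strictly inside the window $(i, \mrm{next}(i))$ satisfies $\llrrbracket{T}[q] \le d-1$, since its distinct count is bounded by the number of classes other than $i$'s that occur before the scan first returns to $i$'s class, which is $d-1$. Hence by the induction hypothesis all such inner matches are already known, so while scanning from $i{+}1$ I can correctly decide whether each parameter position opens a fresh class or repeats a known one, and $\mrm{next}(i)$ is precisely the position at which the running count of fresh classes first reaches $d$.

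Finally, once $\mrm{next}$, and thus the parameter partition, is shown to be a function of $\llrrbracket{T}$, the equality $\llrrbracket{S} = \llrrbracket{T}$ yields identical length, static data, and parameter partitions for $S$ and $T$; these determine $\lrangle{S} = \lrangle{T}$ and therefore $S \approx T$ by~\cite{PMA}. I expect the verification of the bound $\llrrbracket{T}[q] \le d-1$ for inner matches, together with the bookkeeping showing that the scan stops exactly at $\mrm{next}(i)$ and nowhere earlier, to be the only delicate points; the remaining steps are routine.
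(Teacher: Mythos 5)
Your proof is correct, but the converse direction takes a genuinely different route from the paper's. The paper proves the contrapositive locally: assuming $S \not\approx T$, it takes the leftmost position $i$ where $\lrangle{S}$ and $\lrangle{T}$ disagree (say $\lrangle{S}[i] < \lrangle{T}[i]$), jumps back to $j = i - \lrangle{S}[i]$, and shows directly that $\llrrbracket{S}[j] < \llrrbracket{T}[j]$ because the counting window for $T[j]$ must additionally pick up the character $T[i]$, which by the disagreement cannot already occur in $T[j:i-1]$. That argument is about five lines and needs no global reconstruction. You instead prove the stronger statement that $\llrrbracket{T}$ \emph{determines} the occurrence partition of the parameter positions, by recovering the cyclic successor map $\mathrm{next}$ via strong induction on the value $d = \llrrbracket{T}[i]$; your key bound (any pair $(q,\mathrm{next}(q))$ nested strictly inside $i$'s window has $\llrrbracket{T}[q] \le d-1$, since that inner window avoids $T[i]$'s class) is correct, and it does let you classify each scanned position as fresh or repeated and stop exactly when the fresh count reaches $d$. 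What your approach buys is an explicit decoding procedure showing the encoding is injective up to p-match, which is conceptually informative (and close in spirit to how one would invert the encoding algorithmically); what it costs is a noticeably longer and more delicate induction where the paper gets away with exhibiting a single differing position. Your forward direction matches the paper's essentially verbatim.
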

\begin{proof}
	For simplicity, assume that $S$ and $T$ contain no static character.
	Suppose $S \approx T$.
	Since $S[i]=S[j]$ iff $T[i]=T[j]$ for any indices $i,j$, we have
	\[
		\llrrbracket{S}[i]=|\rest{\Pi}{S_{n-i}[1:\CL_{S_{n-i}}(S[i])]}|
			=|\rest{\Pi}{T_{n-i}[1:\CL_{T_{n-i}}(T[i])]}|=\llrrbracket{T}[i]\,.
	\]
	for all $i$.

	Suppose $S \not\approx T$.
	Let $i$ be the leftmost position such that $\lrangle{S}[i] \neq \lrangle{T}[i]$.
	We may assume without loss of generality that $\lrangle{S}[i] < \lrangle{T}[i]$.
	Let $j = i-\lrangle{S}[i]$.
	Then,
	\[
	\llrrbracket{S}[j] = |\rest{\Pi}{S[j+1:i]}| = |\rest{\Pi}{S[j:i-1]}| = |\rest{\Pi}{T[j:i-1]}|
	\,,\]	
	since $S[j:i-1] \approx T[j:i-1]$.
	The fact $S[j] \notin \rest{\Pi}{S[j+1:i-1]}$ implies $T[j] \notin \rest{\Pi}{T[j+1:i-1]}$.
	Moreover, $\lrangle{S}[i] < \lrangle{T}[i]$ implies $T[i] \notin \rest{\Pi}{T[j:i-1]}$.
	Hence, 
	\[
	\llrrbracket{T}[j] \ge |(\rest{\Pi}{T[j:i-1]}) \cup \{T[i]\}| > |\rest{\Pi}{T[j:i-1]}| = \llrrbracket{S}[j]\,. 
	\qedhere\]
\end{proof}

\Cref{cor:RAT} is a corollary to the following lemma.
\begin{lemma}
\label{lem:RAT}
	For any $i$ and $j$ such that $1 \le i < j \le n$,
	$\RA^{-1}_S[i] < \RA^{-1}_S[j]$ iff $\RA^{-1}_{T}[i] < \RA^{-1}_{T}[j]$.
\end{lemma}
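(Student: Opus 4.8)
The plan is to translate the statement about $\RA^{-1}$ into a purely lexicographic one. By the definition of the rotation array, $\RA_W^{-1}[p]$ is exactly the lexicographic rank of $\lrangle{W_p}$ among all rotations of $W$, so $\RA_W^{-1}[p] < \RA_W^{-1}[q]$ holds iff $\lrangle{W_p} < \lrangle{W_q}$ in lexicographic order (a pairwise comparison is unaffected by the extra rotation $T_{n+1}$ that $T$ has). Thus it suffices to prove, for $1 \le i < j \le n$, that $\lrangle{S_i} < \lrangle{S_j}$ iff $\lrangle{T_i} < \lrangle{T_j}$.

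First I would write the rotations explicitly. For $1 \le i \le n$ we have $S_i = S[n+1-i:n]\,S[1:n-i]$ and $T_i = S[n+1-i:n]\,c\,S[1:n-i]$, so $T_i$ is obtained from $S_i$ by inserting $c$ immediately after the length-$i$ prefix, and in both strings the unique occurrence of $\$$ sits at position $i$. Because $\lrangle{W}[1:i]$ depends only on $W[1:i]$ and $S_i[1:i]=T_i[1:i]$, we get $\lrangle{S_i}[1:i]=\lrangle{T_i}[1:i]$; and since $i<j$ places the inserted $c$ at position $j+1>i$ in $T_j$, also $S_j[1:i]=T_j[1:i]$ and hence $\lrangle{S_j}[1:i]=\lrangle{T_j}[1:i]$. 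This is precisely the observation, illustrated in \Cref{table:update}, that the encodings change only after the $\$$.

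The crux is to show that the comparison between $\lrangle{S_i}$ and $\lrangle{S_j}$ is already settled within the first $i$ characters, and identically for $T$. Since $\$$ is the global minimum and occurs once, at position $i$ in $S_i$ and at position $j>i$ in $S_j$, positions $1,\dots,i-1$ contain no $\$$ in either string. Hence two cases: either the encodings first disagree at some $k<i$, settling the order there, or they agree on $[1:i-1]$ and position $i$ settles it, where $\lrangle{S_i}[i]=\$$ is strictly smaller than the non-$\$$ value $\lrangle{S_j}[i]$. In both cases the order depends only on $\lrangle{S_i}[1:i]$ and $\lrangle{S_j}[1:i]$. The identical argument applies to $T_i,T_j$ since their $\$$ positions are unchanged, so the $T$-order depends only on $\lrangle{T_i}[1:i]$ and $\lrangle{T_j}[1:i]$. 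Combining with the prefix equalities of the previous paragraph gives $\lrangle{S_i}<\lrangle{S_j}$ iff $\lrangle{T_i}<\lrangle{T_j}$, which is the claim.

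I expect the only real difficulty to be the careful bookkeeping in the crux step: confirming that no $\$$ appears before position $i$ in either rotation and that inserting $c$ after the $\$$ leaves the prev-encoding values on $[1:i]$ genuinely untouched. Once these facts are nailed down, the case split at position $i$ and the transfer to $T$ are routine.
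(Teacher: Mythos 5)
Your proof is correct and follows essentially the same route as the paper's: both reduce the claim to comparing $\lrangle{S_i}$ with $\lrangle{S_j}$ (resp.\ $\lrangle{T_i}$ with $\lrangle{T_j}$), observe that the length-$i$ prefixes of the rotations are unchanged by inserting $c$ after the $\$$, and use the fact that $\$$ sits at position $i$ of $S_i$ but not within the first $i$ positions of $S_j$ to conclude that the lexicographic comparison is settled within those prefixes. Your write-up just spells out the case split at position $i$ that the paper compresses into ``$\lrangle{U\$}\neq\lrangle{X}$''.
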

\begin{proof}
	Let $S_i = U\$V$ and $S_j = XY\$Z$, where $|U\$|=|X|=i<j \le n$.
	We have $T_i=U\$ c V$ and $T_j = XY\$ c Z$.
	Since $\$$ does not occur in $X$, $\lrangle{U \$}\neq\lrangle{X}$.
	Thus,
	\[
	\RA^{-1}_S[i] < \RA^{-1}_S[j] \iff \lrangle{U \$} < \lrangle{X} \iff \RA^{-1}_T[i] < \RA^{-1}_T[j] 
	\,. \qedhere\]
\end{proof}

\lemstatic*
\begin{proof}
	If $c \in \Sigma$, then $\llrrbracket{S} = \llrrbracket{T}[2:{}]$ by definition.
	So, for any $i \in \{1,\dots,n\}$,
	\[
	\First^{\circ}_{T}[i] = \llrrbracket{T_{\RA_S[i]}}[1] = \llrrbracket{S_{\RA_S[i]}}[1] = \First_S[i]
	\]
	and
	\[
	\PBWT^{\circ}_{T}[i] = \llrrbracket{T_{\RA_S[i]}}[n+1] = \begin{cases}
		\llrrbracket{S_{\RA_S[i]}}[n] = \PBWT_S[i]	& \text{if $\RA_S[i] \neq n$,}
		\\		c	& \text{if $\RA_S[i]=n$.}
	\end{cases}
	\]
	\qed
\end{proof}

\lemparameterized*
\begin{proof}
	The claim on value of $\llrrbracket{T}[1]$ is clear by definition.
	For $p \ge 1$, if $T[p+1]=S[p] \in \Sigma$, then $\llrrbracket{T}[p+1] = \llrrbracket{S}[p]$.
	
	Let us consider the case $S[p] = a \in \Pi$.
	If $p \ne \CR_S(a)$, then $a$ occurs somewhere after $p$ in $S$.
	Let $q > p$ be the first occurrence position of $a$ after $p$ in $S$.
	By definition, $\llrrbracket{S}[p]=|\rest{\Pi}{S[p+1:q]}| = |\rest{\Pi}{T[p+2:q+1]}|=\llrrbracket{T}[p+1]$.
	
	Suppose $p = \CR_S(a)$ for some $a \in \Pi$.
	If $a = c$, since $T[1] = c$ and $c \not \in \rest{\Pi}{S[p+1:n]}$, we have $\llrrbracket{T}[p+1] = |\rest{\Pi}{S[p+1:n]} \cup \{c\}| = |\rest{\Pi}{S[p+1:n]}| + 1$.
	If $\CR_S(c) = 0$ or $\CL_S(a) < \CL_S(c) \le \CR_S(c) < \CR_S(a)=p$,
	$\llrrbracket{S}[p]$ counts the number of distinct p-characters in $S[p+1:{}]S[{}:\CL_S(a)]$, where $c$ does not occur.
	On the other hand, $\llrrbracket{T}[p+1]$ counts the ones in $S[p+1:{}]cS[{}:\CL_S(a)]$.
	That is, $\llrrbracket{T}[p+1]=\llrrbracket{S}[p]+1$.
	Otherwise, if $\CL_S(c) < \CL_S(a)$ or $\CR_S(a) < \CR_S(c)$, we already have $c$ in $S[p+1:{}]S[{}:\CL_S(a)]$.
	Thus $\llrrbracket{T}[p+1] = \llrrbracket{S}[p]$.
	\qed
\end{proof}

\lempostatic*
\begin{proof}
	By definition, $\Pos = |T|_{\Sigma_{< c}} + |\{\, j \mid \First_{T}[j] = c,\, 1 \le j \le \Pos \,\}|$.
	By \Cref{cor:Kim4} and the bijectivity of $\LF_T$, the second term equals
	\[
	|\{\, i \mid \PBWT_{T}[i] = c,\, 1 \le i \le \LF_{T}^{-1}(\Pos) \,\}| 
	\]
	and further more equals
	\[
	|\{\, i \mid \PBWT^\circ_{T}[i] = c,\, 1 \le i \le \Dol \,\}| 
	\]
	because $\PBWT_{T}$ and $\PBWT^\circ_{T}$ are different only in that $\PBWT_{T}$ has an extra element $\$ < c$,
	and the position $\LF_{T}^{-1}(\Pos)$ in $\PBWT_T$ corresponds to the position $\Dol$ in $\PBWT^\circ_T$.
	\qed
\end{proof}

\lemequallcp*
\begin{proof}
	Let $S_p = U\$V$ and $S_q = XY\$Z$, where $|U\$| = |X| = p < q = |XY\$|$.
	Then, $T_p=U\$ cV$ and $T_q = XY \$ cZ$.
	Since $\$$ does not appear in $X$,
	\[
	\Lcp(\lrangle{S_p}, \lrangle{S_q}) = \Lcp(U\$,X) = \Lcp(\lrangle{T_p}, \lrangle{T_q})\,.
	\qedhere\]
\end{proof}

\end{document}